\documentclass[11pt]{article}
\usepackage[T1]{fontenc}
\usepackage{lmodern}
\usepackage[margin=1in]{geometry}
\usepackage{amsmath,amssymb,amsthm}
\usepackage{booktabs}
\usepackage{graphicx}
\usepackage{xcolor}
\usepackage{multirow}
\usepackage{algorithm}
\usepackage{algpseudocode}
\usepackage{enumitem}
\usepackage[hidelinks]{hyperref}

\newtheorem{theorem}{Theorem}
\newtheorem{proposition}{Proposition}
\newtheorem{corollary}{Corollary}
\theoremstyle{remark}
\newtheorem{remark}{Remark}

\title{\bf Exact Optimal Transport by Matching\\[4pt]
\normalfont\small Version~1.8 (revision of \today)}

\author{
Dmitry Kamenetsky
\\[2pt]
\small\texttt{dkamenetsky@gmail.com}
}

\date{}

\begin{document}
\maketitle

{\centering\small\itshape
All algorithm design, implementation, experiments, analysis, and writing
reported in this paper was carried out by an Arena.ai Agent Mode
autonomous software agent under the direction of Dmitry Kamenetsky, who selected the
problem, set the scope, and reviewed intermediate results. A chronological
summary of the prompts that directed the work is given in
Appendix~\ref{app:prompts}, and the full transcript is archived in the
code repository.
\par\vspace{8pt}\par}

\begin{abstract}
Balanced discrete optimal transport between $n$ sources and $n$ targets of
unit mass is exactly the minimum-cost assignment problem, i.e.\ a bipartite
perfect matching (Birkhoff--von Neumann~\cite{birkhoff1946}), and can
therefore be solved \emph{exactly} by industrial matching engines
(shortest-augmenting-path assignment~\cite{jonker1987}, Edmonds' Blossom
algorithm~\cite{edmonds1965} and its Blossom~V/VI
implementations~\cite{blossomv,blossomvi}) in milliseconds to seconds. We
ask when this exact approach beats the standard approximate
alternatives---entropic Sinkhorn~\cite{sinkknopp1967,cuturi2013} and its
state-of-the-art accelerated variant Greenkhorn~\cite{awr2017,lin2022}---and
make the sparse-exact side \emph{certified} by a standard LP dual-feasibility
clip~\cite{schrijver2003}. Three contributions.
(i)~\emph{Measurement}: on dense 2-D instances, exact matching is faster
\emph{and} strictly more accurate than either approximate method throughout
the moderate-$n$ regime: $0.01\,\mathrm{s}$ at $n=500$ to
$11.5\,\mathrm{s}$ at $n=8000$; to hit a $\le 1\%$ quality target, a rough
power-law extrapolation of Greenkhorn's measured step counts puts the
approximate cost at $\sim\!10$--$\sim\!80\,\mathrm{min}$ for Greenkhorn
(factors of roughly $4{\times}10^2$--$6{\times}10^4$ over exact; plain
Sinkhorn is $20$--$650\times$ slower still). Greenkhorn, measured rather
than assumed, is within $0.93\times$--$5.0\times$ of plain Sinkhorn on the
measured cells and converges to the same entropic plan, so the quality gap
is a regularization bias shared by both. (ii)~\emph{A simple certificate}:
given a $k$NN-pool matching and its Blossom dual, a one-pass clipping of
the dual to dense feasibility gives a valid lower bound (the textbook LP
dual-feasibility argument); combined with the Sinkhorn dual potential
$\sum_i u_i+\sum_j v_j$ (which is a lower bound at \emph{every} iterate,
not just at convergence), the bound is checked on 15 instance families at
three pool sizes. (iii)~\emph{Application}: multi-robot task
allocation~\cite{gerkey2004}, where the discrete plan itself is the
deliverable: per-round myopic-exact assignment costs $0.1$--$68\,\mathrm{ms}$,
while a Sinkhorn-plus-hardening pipeline costs $0.12$--$15.9\,\mathrm{s}$
and accumulates $6$--$27\%$ extra travel distance over 15 rounds. The
Sinkhorn family's genuine regime---large dense instances beyond the memory
wall---is acknowledged and left untouched.
All code, data, and results are open artifacts under the MIT license at
\url{https://github.com/dimkadimon/OT-Blossom}.
\end{abstract}

\section{Introduction}
\label{sec:intro}

Discrete optimal transport (OT) is a standard tool in machine learning,
statistics, imaging, and robotics \cite{peyre2019, survey2025}. The scalable
default for large instances is entropic regularization \cite{cuturi2013}:
the smooth problem is solved by the Sinkhorn iteration
\cite{sinkknopp1967}, and---for speed---by accelerated coordinate variants
such as Greenkhorn \cite{awr2017,lin2022}. Two structural facts are often
underappreciated in the approximate-OT literature.

\begin{enumerate}[leftmargin=2em,itemsep=1pt]
\item \textbf{Balanced unit-mass discrete OT \emph{is} the assignment
problem.} With $n$ sources and $n$ targets of unit mass, a transport plan is
a doubly stochastic matrix and the optimum is attained at a permutation
matrix (Birkhoff--von Neumann; Proposition~\ref{prop:int}). The problem is
therefore a bipartite minimum-weight perfect matching---a special case of
the classical assignment problem \cite{schrijver2003}, solvable exactly by
shortest-augmenting-path methods \cite{jonker1987} and, in its general
(non-bipartite) form, by Edmonds' Blossom algorithm \cite{edmonds1965} as
implemented in Blossom~V~\cite{blossomv} and VI~\cite{blossomvi}. Exact
solvers return the \emph{optimal integer plan} and its exact value. (Bipartite
matching is a strict special case of general matching, so Blossom applies
directly, but dedicated bipartite solvers such as Jonker--Volgenant or
shortest-augmenting-path are typically faster on dense bipartite inputs; we
use Blossom~VI for the sparse $k$NN-pool experiments and SciPy's
Jonker--Volgenant routine for dense solves.)
\item \textbf{Exact matching is fast.} On the 2-core/2-GB testbed of this
study, dense exact assignment takes $0.01\,\mathrm{s}$ at $n=500$ and
$11.5\,\mathrm{s}$ at $n=8000$; exploiting local structure, a $k$NN-pool
sparse solve takes $0.18\,\mathrm{s}$ at $n=2000$ with a $0.032\%$ gap to
the dense optimum (Table~5, per-side budget $k{+}1$, see Section~\ref{sec:pool});
1-D instances reach $n=10^6$ in $0.11\,\mathrm{s}$.
\end{enumerate}

The question this paper answers is empirical: \emph{when does exact
matching beat approximate OT?} We answer with (a) a measurement study in
which the approximate baselines---including Greenkhorn---are \emph{implemented
and measured, not assumed}; (b) a \emph{dual-feasibility certificate} for
the $k$NN-pool solver, so that a sparse solution comes with a rigorous,
cheaply computable bound on its distance from the dense optimum; and (c) an
\emph{application hook} in which the discrete plan is the deliverable
(multi-robot task allocation).

\paragraph{Contributions.}
\begin{enumerate}[leftmargin=2em,itemsep=1pt]
\item A measured comparison of exact matching (dense Jonker--Volgenant,
  sparse Blossom~VI $k$NN pool, 1-D closed form) against vanilla Sinkhorn
  and Greenkhorn on dense, clustered, and 1-D+noise instances,
  $n\in\{500,\dots,8000\}$, multiple seeds and distributions
  (Sections~\ref{sec:setup}--\ref{sec:results}). The headline findings:
  (a)~\emph{exact is faster \emph{and} strictly more accurate than either
  approximate method throughout the moderate-$n$ regime}---dense exact runs
  in $0.01\,\mathrm{s}$ at $n=500$ to $11.5\,\mathrm{s}$ at $n=8000$,
  while reaching $\le 1\%$ quality requires $\sim\!10$--$\sim\!80$ min for
  Greenkhorn and $\sim\!200$--$\sim\!5{\times}10^4$ min for Sinkhorn (a
  rough power-law extrapolation); and (b)~Greenkhorn's measured speedup
  over plain Sinkhorn in the small-$\varepsilon$ regime needed for
  $\le 1\%$ accuracy is only $1.0$--$1.5\times$ on most cells (with one
  $5\times$ outlier, over the cells where vanilla converged; ratios on
  the two capped cells are $4.3$--$4.4\times$), while their regularization
  biases are identical.
\item A \emph{gap certificate} for the pool solver requiring no dense solve:
  a one-sided $O(n^2)$ clip of the pool matching dual to a dense-feasible
  dual (Proposition~\ref{prop:repair}, Theorem~\ref{thm:cert})---the
  textbook reduced-cost / dual-restriction operation---combined with the
  Sinkhorn-dual potential lower bound, which holds at \emph{every} iterate
  (Theorem~\ref{thm:ent}). The certificate is valid on all $45$ measured
  cells and tightens monotonically in $k$ on every family
  (Sections~\ref{sec:cert} and~\ref{sec:results}).
\item A multi-robot task-allocation experiment quantifying the ``discrete
  plan is the deliverable'' regime: myopic exact assignment is millisecond
  fast; the Sinkhorn pipeline is $10^2$--$10^3\times$ slower per round and
  accumulates extra travel distance (Section~\ref{sec:app}).
\item All code, data, and results are open artifacts under the MIT license
  (Appendix~\ref{app:repro}).
\end{enumerate}

\paragraph{Positioning and related work.}
We do not claim to beat near-linear-time Sinkhorn-family variants on large
dense instances ($n\gg 10^4$): that regime is genuinely theirs, and our
testbed's $2\,\mathrm{GB}$ memory wall sits near $n\approx 1.2\times 10^4$.
The contribution is to \emph{measure} the boundary in the regime where
exact is affordable, to show that Greenkhorn does not cross that boundary,
and to document how tight a standard dual-feasibility clip is as a
certificate for a sparse-$k$NN solve.

That balanced discrete OT reduces to the assignment problem is textbook
\cite{birkhoff1946,schrijver2003,villani2009,peyre2019}. $k$NN
sparsification is ubiquitous in OT practice: see the POT library
\cite{flamary2021pot}, screened Sinkhorn \cite{alaya2019screenkhorn}, and
sparse/multiscale Sinkhorn \cite{schmitzer2016,schmitzer2019stable}. The
dual-feasibility clip of Proposition~\ref{prop:repair} is the same
coordinate-wise projection used classically for reduced-cost pricing and
variable fixing in integer programming (see, e.g., the dual-reduced-cost
discussion in \cite{schrijver2003}), and an
analogous $\varepsilon$-complementary-slackness certificate for the
auction algorithm gives the classical ``within $n\varepsilon$ of optimal''
bound \cite{bertsekas1992}; we cite these precedents rather than claim
novelty for the clip itself. The log-domain Sinkhorn potentials are dual
feasible for the assignment LP at every iterate (Theorem~\ref{thm:ent}), a
fact standard in the OT literature; an entropy-corrected variant that gives
a quantitative bound on the gap is described in the lecture notes
accompanying \cite{peyre2019}.\label{pg:clip}

\section{Preliminaries}
\label{sec:prelim}

\subsection{Discrete OT as assignment}
Let $C\in \mathbb{R}^{n\times n}_{\ge 0}$ be a cost matrix. Write
\[
\mathcal{U}(n) \;=\; \{P\in \mathbb{R}^{n\times n}_{\ge 0} : P\mathbf{1}=\mathbf{1},\;
P^{\mathsf T}\mathbf{1}=\mathbf{1}\}
\]
for the set of feasible plans (each row and column sums to one; total mass
$n$). The discrete transport problem is
\[
\operatorname{OPT}(C) \;=\; \min_{P\in \mathcal{U}(n)} \; \langle C,P\rangle
\;=\; \sum_{i,j} C_{ij}P_{ij}.
\]
We use the total-mass ($n$) convention throughout, so that at an optimum
$\langle C,P\rangle$ is the sum of $n$ edge costs---the assignment cost.
(Readers comparing to references that use the unit-mass convention should
divide our reported costs by $n$.)

\begin{proposition}[OT = assignment; integer optimum]
\label{prop:int}
$\operatorname{OPT}(C)$ is attained at a permutation matrix. Equivalently,
discrete balanced unit-mass OT is the minimum-cost perfect matching in the
complete bipartite graph with edge weights $C_{ij}$.
\end{proposition}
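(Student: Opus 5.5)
The plan is to combine the Birkhoff--von Neumann theorem with linearity of the objective. I will use the version that says the extreme points of the polytope $\mathcal{U}(n)$ of doubly stochastic matrices are exactly the $n!$ permutation matrices, and that $\mathcal{U}(n)$ is their convex hull.

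First, I will check that the minimum exists. $\mathcal{U}(n)$ is nonempty, since it contains the identity. It is closed, being cut out by finitely many linear equalities and inequalities. It is bounded, because every entry satisfies $0\le P_{ij}\le 1$ (each entry is at most its row sum). Hence $\mathcal{U}(n)$ is compact, and the continuous linear map $P\mapsto\langle C,P\rangle$ attains its minimum on it.

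Next, I will show the minimum is attained at a permutation. Let $P^\star$ be any minimizer. By Birkhoff--von Neumann, write $P^\star=\sum_\sigma \lambda_\sigma P_\sigma$ with $\lambda_\sigma\ge 0$ and $\sum_\sigma\lambda_\sigma=1$, where $P_\sigma$ is the permutation matrix of $\sigma$. Linearity gives
\[
\langle C,P^\star\rangle=\sum_\sigma\lambda_\sigma\langle C,P_\sigma\rangle\ \ge\ \min_\sigma \langle C,P_\sigma\rangle .
\]
Each $P_\sigma$ lies in $\mathcal{U}(n)$, so the reverse inequality also holds. Equality follows, and a minimizing $P_\sigma$ attains $\operatorname{OPT}(C)$.

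For the equivalence, note that $\langle C,P_\sigma\rangle=\sum_i C_{i\sigma(i)}$. This is the weight of the perfect matching $\{(i,\sigma(i))\}$ in the complete bipartite graph $K_{n,n}$. Permutations and perfect matchings of $K_{n,n}$ are in bijection, so $\operatorname{OPT}(C)$ equals the minimum-weight perfect matching value, and optimal plans correspond to optimal matchings.

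The main obstacle is Birkhoff--von Neumann itself, if it should not be taken as a citation. My first choice for a self-contained proof is an induction on the number of nonzero entries, using Hall's theorem. For a doubly stochastic $P$ and any set $S$ of rows, the columns supporting $S$ must carry total mass at least $|S|$, and each column carries mass at most $1$, so there are at least $|S|$ of them. Hall's theorem then gives a permutation $\sigma$ with $P_{i\sigma(i)}>0$ for all $i$. Set $\lambda=\min_i P_{i\sigma(i)}$. If $\lambda=1$, then $P=P_\sigma$. Otherwise, $(P-\lambda P_\sigma)/(1-\lambda)$ is doubly stochastic with strictly fewer nonzeros, and the induction applies.

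An alternative avoids the full decomposition. The constraint matrix of $\mathcal{U}(n)$ is the incidence matrix of a bipartite graph and is therefore totally unimodular. So every vertex is integral, hence a $0/1$ matrix with unit row and column sums, which is a permutation matrix. A linear program over a bounded polytope attains its optimum at a vertex, which finishes the argument.
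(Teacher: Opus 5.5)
Your proposal is correct and takes the same route as the paper: Birkhoff--von Neumann plus linearity of $\langle C,\cdot\rangle$, with total unimodularity as the alternative the paper also notes, then the bijection between permutations and perfect matchings of $K_{n,n}$. Your additions are all sound and make self-contained what the paper only cites: compactness for existence of a minimizer, the explicit convex-combination inequality in place of the ``extreme point'' appeal, and the Hall-based proof of Birkhoff--von Neumann.
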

\begin{proof}
By the Birkhoff--von Neumann theorem \cite{birkhoff1946},
$\mathcal{U}(n)$ is the convex hull of the permutation matrices. The
objective $\langle C,\cdot\rangle$ is linear, so its minimum over the
polytope is attained at an extreme point, i.e.\ at a permutation matrix.
Equivalently, the constraint matrix of the transportation problem is
totally unimodular \cite{schrijver2003}. Under any bijection $\pi$, the
permutation matrix $P_{i,\pi(i)}=1$ is a perfect matching of the bipartite
graph with cost $\sum_i C_{i,\pi(i)}$, which proves the equivalence.
\end{proof}

\subsection{Entropic regularization and the Sinkhorn iteration}
For $\varepsilon>0$ define the regularized problem
\begin{equation}
\label{eq:weps}
W(\varepsilon) \;=\; \min_{P\in \mathcal{U}(n)}
\bigl[\,\langle C,P\rangle \;+\; \varepsilon \sum_{ij} P_{ij}\log P_{ij}\,\bigr],
\qquad (0\log 0 := 0).
\end{equation}
The minimizer is the Gibbs kernel $P^{\varepsilon}_{ij}\propto
\exp(-C_{ij}/\varepsilon)$ renormalized to have unit marginals.
The Sinkhorn--Knopp iteration~\cite{sinkknopp1967,cuturi2013} alternates
row and column scaling of $K=\exp(-C/\varepsilon)$; in log form, with
potentials $(u,v)$ such that
$P_{ij}=\exp\bigl((u_i+v_j-C_{ij})/\varepsilon\bigr)$ (where
$\operatorname{LSE}_j(a_j):=\log\sum_j e^{a_j}$ is the log-sum-exp
reduction defined above), each full row pass sets
\begin{equation}
\label{eq:sinkupdate}
u_i \;\leftarrow\; -\varepsilon\,\operatorname{LSE}_j\Bigl(\frac{v_j-C_{ij}}{\varepsilon}\Bigr)
\;=\; u_i-\varepsilon\log\!\sum_j P_{ij},
\end{equation}
and symmetrically for $v_j$
(where $\operatorname{LSE}(z)=\log\sum_k e^{z_k}$ is the standard
log-sum-exp primitive, and $\operatorname{LSE}_i,\operatorname{LSE}_j$
denote reduction over the row and column index respectively).
We run the log-domain variant, stopping when either the maximum marginal
deviation falls below $10^{-9}$ or a cost plateau is detected (see
Section~\ref{sec:approx}).

\textbf{Two different quantities.} Equation~\eqref{eq:weps} matches the
standard entropic-OT form used in \cite{cuturi2013,peyre2019}: with the
Shannon entropy $H(P)=-\sum_{ij} P_{ij}\log P_{ij}$ (taking $0\log 0=0$),
the regularizer $+\varepsilon\sum P_{ij}\log P_{ij}$ is $-\varepsilon H(P)$,
so $W(\varepsilon)$ is exactly the Cuturi/Peyr\'e--Cuturi Sinkhorn value.
Two different numbers are routinely compared to OPT and are easily confused:
(i) the \emph{transport cost} of the entropic plan, $\langle C,P^\varepsilon\rangle$,
which is always $\ge\mathrm{OPT}$ because $P^\varepsilon$ is a feasible
transport plan---this is the ``gap'' column in Tables~\ref{tab:approx}
and~\ref{tab:ghead}; and (ii) the \emph{regularized value}
$W(\varepsilon)=\sum_i u_i+\sum_j v_j$, which is always $\le\mathrm{OPT}$
by weak duality (Theorem~\ref{thm:ent}). Algorithm~\ref{alg:sink} returns
the dual value $\sum_i u_i+\sum_j v_j$, which coincides with
$\langle C,P\rangle+\varepsilon\sum P_{ij}\log P_{ij}$ at convergence.

Greenkhorn \cite{awr2017} replaces the cyclic pass by a greedy
single-coordinate update: each step picks the most-violated row or column
(measured by the KL-type residual $\rho(a,b)=b-a+a\log(a/b)$) and applies
the exact closed-form update that restores that one marginal, at $O(n)$
cost per step with incremental marginal bookkeeping
(Algorithm~\ref{alg:green}). Early Greenkhorn analyses
\cite{awr2017} gave an $\tilde O(n^2/\varepsilon^3)$ total-work bound in
terms of additive accuracy; the tighter $\tilde O(n^2/\varepsilon^2)$
bound is from later work \cite{lin2022}.

\subsection{Exact solvers used}
\paragraph{Dense.} We use SciPy's optimized Jonker--Volgenant-style
shortest-augmenting-path implementation \texttt{linear\_sum\_assignment}
\cite{jonker1987} (the modified Jonker--Volgenant variant of Crouse
\cite{crouse2016}, as used in SciPy's \texttt{linear\_sum\_assignment}),
with
empirical roughly-cubic scaling and $O(n^2)$ memory on dense inputs; see
Section~\ref{sec:results}.
\paragraph{Sparse pool.} For a pool $E_k\subseteq [n]\times[n]$ (the $k$
nearest targets of each source, symmetrically), solve the assignment
restricted to $E_k$ as a minimum-weight perfect matching with our
edge-list driver for Blossom~VI \cite{blossomvi} (Blossom~V \cite{blossomv}
gives identical matching costs; see Section~\ref{sec:setup}). Costs are
integer (milliunits) as required by the driver. We note that the pool
problem is bipartite, so any bipartite sparse LAP solver (shortest
augmenting path, cost-scaling, auction with $\varepsilon$-scaling) could be
substituted; the Blossom~VI driver was convenient because the same code
base was used for prior non-bipartite matching experiments.
\paragraph{1-D.} When sources and targets lie on a line (or on two parallel
near-lines, our ``1-D+noise'' family), the optimal map is the sorted
identity (Proposition~\ref{prop:1d}), solvable in $O(n\log n)$ sort time
with no matrix formed.

\subsection{One-dimensional closed form}
When both point sets are supported on the line, OT admits a closed-form
solution via sorting. This is the classical quantile (monotone) map:

\begin{proposition}[1-D optimality of the quantile map]
\label{prop:1d}
Let $x_{(1)}\le\cdots\le x_{(n)}$, $y_{(1)}\le\cdots\le y_{(n)}$. For the
quadratic cost, and more generally for any convex function $h$ applied to
the pairwise difference (cost $h(x_i-y_j)$; this includes $|x_i-y_j|$ used
in the 1-D+noise experiments), the identity map $x_{(i)}\mapsto y_{(i)}$
is \emph{an} optimal bijection.
\end{proposition}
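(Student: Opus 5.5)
The plan is an exchange (uncrossing) argument built on the two-point inequality for convex $h$. By Proposition~\ref{prop:int}, $\operatorname{OPT}(C)$ with $C_{ij}=h(x_i-y_j)$ is attained at some bijection. It therefore suffices to show that among optimal bijections there is one that is monotone. Relabel so that $x_1\le\cdots\le x_n$ and $y_1\le\cdots\le y_n$; the monotone bijection is then the identity.

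\textbf{Key lemma (two-point inequality).} For $x\le x'$ and $y\le y'$,
\[
h(x-y)+h(x'-y')\;\le\; h(x-y')+h(x'-y).
\]
To prove it, set $a=x-y'$, $d=x'-y$, $b=x-y$, $c=x'-y'$. Then $a\le b,c\le d$ and $b+c=a+d$. Write $b=\lambda a+(1-\lambda)d$; then $c=(1-\lambda)a+\lambda d$. Applying convexity of $h$ to each point and summing gives $h(b)+h(c)\le h(a)+h(d)$. The degenerate case $a=d$ is trivial. The cases $|\cdot|$ and the quadratic cost are instances of this.

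\textbf{Uncrossing step.} Take an optimal bijection $\pi$. If it is not the identity, there exist $i<i'$ with $\pi(i)>\pi(i')$, because any non-identity permutation has an inversion. Swapping the targets of $i$ and $i'$ does not increase the cost, by the lemma with $x=x_i$, $x'=x_{i'}$, $y=y_{\pi(i')}$, $y'=y_{\pi(i)}$. Each swap strictly reduces the number of inversions, so after finitely many swaps we reach the identity without increasing cost. Hence the identity is optimal. Ties among equal $x$ or equal $y$ cause no trouble, since the lemma only needs non-strict orderings. This is also why the statement says ``\emph{an} optimal bijection.''

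\textbf{Main obstacle.} The one real step is the lemma, specifically the convex-combination bookkeeping. The point is that $b$ and $c$ lie between $a$ and $d$ with the same sum, which is exactly the setting where convexity gives the comparison. I expect everything else to be routine.
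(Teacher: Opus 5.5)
Your proof is correct and follows the paper's route. Both use an uncrossing argument: convexity of $h$ yields the Monge (two-point) inequality, so swapping an inverted pair never increases the cost and strictly reduces the number of inversions until the identity bijection is reached. You also write out the convex-combination derivation of that inequality, which the paper only asserts, and by starting from an optimal bijection you avoid the paper's ``identity up to ties'' caveat.
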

\begin{proof}
It suffices to show no crossing is present in an optimal map. Suppose
$\pi$ has a crossing: $i<i'$ with $\pi(i)>\pi(i')$. With
$a=x_{(i)}\le b=x_{(i')}$ and $c=y_{\pi(i')}\le d=y_{\pi(i)}$, the
convexity of $h$ gives the Monge inequality
\[
h(a-c)+h(b-d) \;\le\; h(a-d)+h(b-c),
\]
so uncrossing the pair does not increase the cost. Repeated uncrossing
reduces the number of inversions and terminates at a non-crossing
(monotone) bijection, which is the identity up to ties
\cite[Ch.~2]{villani2009}.
\end{proof}

\section{Approximate baselines: implementation and validation}
\label{sec:approx}

Both approximate methods are implemented in the log domain (NumPy,
single-threaded) and stopped when either the maximum marginal deviation
falls below $10^{-9}$ or a \emph{cost plateau} is detected, i.e.\ the
transport cost $\langle C,P\rangle$ changes by less than $0.01\%$ over a
sliding window of successive iterates. In practice Greenkhorn at small
$f$ (small regularization) reaches such a plateau at marginal deviations
of $10^{-6}$--$10^{-3}$, long before the $10^{-9}$ marginal tolerance is
met. The regularization
is parameterized as a fraction of the mean entry of the cost matrix,
$f=\varepsilon/\overline{C}$ where $\overline{C}=n^{-2}\sum_{ij}C_{ij}$, on
the grid $f\in\{0.05,0.02,0.01\}$ (plus $0.005$ at $n=500$).

Two validation facts structure the comparison. \emph{First}, Greenkhorn and
vanilla Sinkhorn converge to the \emph{identical} entropic plan up to the
stated stopping tolerance at every measured $(n,f)$ (cost agreement
$\le 0.04\%$, Table~\ref{tab:ghead}), so the comparison between them is
purely a wall-time comparison; the gap to exact is the \emph{regularization
bias}, common to both. \emph{Second}, a Sinkhorn plan is fractional; where
an integer plan is required it must be \emph{hardened} (e.g.\ the
row-argmax permutation, our \texttt{greedy\_hard}). Hardening a converged
$n=500$, $f=0.01$ plan (gap $5.4\%$) costs $+10.9\%$ over the exact
optimum---which the exact solver returns, optimally, in $0.01\,\mathrm{s}$.

We emphasize that these baselines are our own single-threaded NumPy
implementations and are not tuned the way a library implementation (e.g.\
POT \cite{flamary2021pot}) would be; our timing claims are therefore
statements about these specific implementations on our testbed, not about
fundamental limits of Sinkhorn/Greenkhorn.

\section{Duality certificates for the pool solver}
\label{sec:cert}

This section gives a rigorous certificate for the pool solver that
requires \emph{no} dense solve. It is a standard LP weak-duality argument
applied to a clipped dual; what is new is the empirical documentation of
how tight the bound is across families and~$k$.

\paragraph{Setup.} The pool solver returns a perfect matching $M$ inside
$E_k$, with cost $\mathrm{UB}=\sum_{(i,j)\in M} C_{ij}$. Since $M$ is a
feasible dense solution, $\mathrm{OPT}\le \mathrm{UB}$; $\mathrm{UB}$ is a
trivial upper bound. What a certificate must supply is a lower bound
$\mathrm{LB}\le\mathrm{OPT}$ that is \emph{cheap and rigorous}. Blossom~VI
returns, together with the pool matching, the dual of the pool problem:
node potentials $(\alpha_i,\beta_j)$ with
$\alpha_i+\beta_j\le C_{ij}$ on pool edges and
$\sum_i\alpha_i+\sum_j\beta_j=\mathrm{UB}$ (complementary slackness; the
blossom duals vanish for bipartite instances, see Section~\ref{sec:prelim}).
The dual is \emph{not} feasible for the dense problem (it may violate
$\alpha_i+\beta_j\le C_{ij}$ off the pool), so it is not by itself a lower
bound on $\mathrm{OPT}$. The repair below fixes exactly that.

\begin{proposition}[One-sided dual clip]
\label{prop:repair}
For any vectors $(\alpha,\beta)$ (no assumption required), define
\[
\hat\alpha_i \;=\; \min\Bigl(\alpha_i,\; \min_j \bigl[C_{ij}-\beta_j\bigr]\Bigr),
\qquad
\hat\beta_j \;=\; \min\Bigl(\beta_j,\; \min_i \bigl[C_{ij}-\alpha_i\bigr]\Bigr).
\]
Then both $(\hat\alpha,\beta)$ and $(\alpha,\hat\beta)$ are dual-feasible
for the \emph{dense} assignment problem:
$\hat\alpha_i+\beta_j\le C_{ij}$ and $\alpha_i+\hat\beta_j\le C_{ij}$ for
\emph{all} $(i,j)\in[n]^2$.
\end{proposition}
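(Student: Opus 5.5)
The plan is a direct, entrywise verification of the two inequalities in Proposition~\ref{prop:repair}. Only the definition of $\hat\alpha,\hat\beta$ and the fact that a minimum is bounded above by each of its arguments are needed. No LP duality and no property of $(\alpha,\beta)$ coming from Blossom~VI enter, which is consistent with the statement's ``no assumption required''.

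First I treat $(\hat\alpha,\beta)$. Fix an arbitrary pair $(i,j)\in[n]^2$. Since $\hat\alpha_i$ is the minimum of $\alpha_i$ and of all the values $C_{ij'}-\beta_{j'}$ for $j'\in[n]$, it is at most the particular term with $j'=j$:
\[
\hat\alpha_i\;\le\;\min_{j'}\bigl[C_{ij'}-\beta_{j'}\bigr]\;\le\;C_{ij}-\beta_j .
\]
Rearranging gives $\hat\alpha_i+\beta_j\le C_{ij}$. Because $(i,j)$ was arbitrary, this holds on all of $[n]^2$, pool edges and off-pool edges alike.

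Second, the claim for $(\alpha,\hat\beta)$ follows by the symmetric argument with the roles of rows and columns exchanged. Here $\hat\beta_j\le\min_{i'}[C_{i'j}-\alpha_{i'}]\le C_{ij}-\alpha_i$, which gives $\alpha_i+\hat\beta_j\le C_{ij}$ for every $(i,j)$.

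There is no real obstacle. The one point to be careful about is that each clip must be paired with the \emph{unclipped} other side. The pair $(\hat\alpha,\beta)$ is feasible because $\hat\alpha$ was computed against $\beta$ itself. Mixing two clips, as in $(\hat\alpha,\hat\beta)$, is also feasible, since $\hat\beta\le\beta$ gives $\hat\alpha_i+\hat\beta_j\le\hat\alpha_i+\beta_j\le C_{ij}$. That observation is not needed for the statement, but I would remark on it.

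It is also worth noting that the outer $\min$ with $\alpha_i$ plays no role in feasibility. It only ensures that the clip never raises a potential, so the repaired dual objective is at most the original $\mathrm{UB}$. The lower-bound use of this objective belongs to Theorem~\ref{thm:cert}.
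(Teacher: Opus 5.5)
Your proof is correct and is the same argument as the paper's. Fix $(i,j)$, bound $\hat\alpha_i$ by the single term $C_{ij}-\beta_j$ of the inner minimum, and get the $(\alpha,\hat\beta)$ case by symmetry; your side remarks (feasibility of $(\hat\alpha,\hat\beta)$, and the outer $\min$ with $\alpha_i$ being irrelevant to feasibility) are also correct.
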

\begin{proof}
Fix any pair $(i,j)$. By definition,
$\hat\alpha_i\le C_{ij}-\beta_j$, since $j$ is one of the indices over
which the inner minimum is taken. Hence
$\hat\alpha_i+\beta_j\le C_{ij}$ for all $(i,j)$. The second statement is
symmetric. $\square$
\end{proof}

\noindent The clip needs no hypothesis on $(\alpha,\beta)$: it is simply a
coordinate-wise projection of an arbitrary pair of vectors onto the dual
feasible set of the assignment LP, i.e.\ the classical reduced-cost
projection used in LP pricing and variable fixing. When $(\alpha,\beta)$
comes from a pool optimum the clipped vectors are typically close to the
true dual, which is what makes the bound empirically tight.

\begin{theorem}[Pool gap certificate]
\label{thm:cert}
Let $\mathrm{UB}$ be the pool matching cost and set
\begin{equation}
\label{eq:lb1}
\mathrm{LB}_1 \;=\; \max\Bigl(
\textstyle\sum_i\hat\alpha_i+\sum_j\beta_j,\;\;
\sum_i\alpha_i+\sum_j\hat\beta_j
\Bigr).
\end{equation}
Then $\mathrm{LB}_1\le \mathrm{OPT}(C)\le \mathrm{UB}$, and the pool
solution is certified to be within $\mathrm{UB}-\mathrm{LB}_1$ in absolute
terms (relative width $(\mathrm{UB}-\mathrm{LB}_1)/\mathrm{UB}$, and
$(\mathrm{UB}-\mathrm{LB}_1)/\mathrm{LB}_1$ when $\mathrm{LB}_1>0$) of the
dense optimum---without solving the dense problem.
\end{theorem}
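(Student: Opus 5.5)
The plan is to prove the two inequalities separately; the certificate statement then follows by rearranging them. The upper bound $\mathrm{OPT}(C)\le\mathrm{UB}$ is immediate. The pool matching $M$ is a perfect matching of the complete bipartite graph, so its permutation matrix lies in $\mathcal{U}(n)$. Its cost $\sum_{(i,j)\in M}C_{ij}=\mathrm{UB}$ is therefore the objective value of a feasible plan, which can only be at least the minimum. This step needs nothing about the pool dual.

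For the lower bound I will use LP weak duality for the dense transportation LP $\min\{\langle C,P\rangle : P\in\mathcal{U}(n)\}$. The weak-duality lemma I would prove inline is this: for any pair $(a,b)$ with $a_i+b_j\le C_{ij}$ for all $(i,j)\in[n]^2$, and any $P\in\mathcal{U}(n)$,
\[
\langle C,P\rangle=\sum_{i,j}C_{ij}P_{ij}\;\ge\;\sum_{i,j}(a_i+b_j)P_{ij}=\sum_i a_i\sum_j P_{ij}+\sum_j b_j\sum_i P_{ij}=\sum_i a_i+\sum_j b_j .
\]
The inequality uses $P\ge 0$, and the last equality uses the unit row and column marginals. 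Minimizing the left side over $P$ gives $\sum_i a_i+\sum_j b_j\le\mathrm{OPT}(C)$. By Proposition~\ref{prop:int} the minimum is attained, though only the infimum over the polytope is needed.

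Next, Proposition~\ref{prop:repair} applies to the Blossom dual $(\alpha,\beta)$ without any hypothesis on it. It shows that $(\hat\alpha,\beta)$ and $(\alpha,\hat\beta)$ are both dense-feasible. Applying the lemma to each pair shows that both expressions inside the $\max$ in \eqref{eq:lb1} are at most $\mathrm{OPT}(C)$, and hence so is their maximum $\mathrm{LB}_1$.

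Combining the two bounds gives $0\le\mathrm{UB}-\mathrm{OPT}(C)\le\mathrm{UB}-\mathrm{LB}_1$, which is the absolute certificate. Dividing by $\mathrm{UB}$ gives the first relative width; this is valid for $\mathrm{UB}>0$, and if $\mathrm{UB}=0$ the certified gap is $\le 0$ and the matching is optimal. For the second relative form, $\mathrm{LB}_1>0$ together with $\mathrm{LB}_1\le\mathrm{OPT}$ gives $(\mathrm{UB}-\mathrm{OPT})/\mathrm{OPT}\le(\mathrm{UB}-\mathrm{LB}_1)/\mathrm{LB}_1$. This holds because both the numerator bound and the denominator move in the favorable direction.

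There is no real obstacle; the one point to handle carefully is bookkeeping. The validity of $\mathrm{LB}_1$ must not depend on the pool dual being optimal or satisfying complementary slackness, since Blossom~VI works with integer milliunit costs. So I will state explicitly that only the feasibility guaranteed by Proposition~\ref{prop:repair} is used. The facts that $(\alpha,\beta)$ is feasible on the pool edges and that $\sum_i\alpha_i+\sum_j\beta_j=\mathrm{UB}$ affect only how tight the bound is, not whether it is valid.
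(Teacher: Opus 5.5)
Your proof is correct and follows the paper's argument. The upper bound comes from the pool matching being a feasible dense plan, and the lower bound comes from applying weak duality to each of the two dense-feasible clipped pairs of Proposition~\ref{prop:repair} and taking the max; your inline derivation of weak duality and your observation that validity uses only the clip's feasibility (not optimality or complementary slackness of the Blossom dual) simply make explicit what the paper leaves to its citation.
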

\begin{proof}
By Proposition~\ref{prop:repair}, both repaired pairs are dense dual
feasible; by weak duality for the assignment LP \cite{schrijver2003}, each
of their objectives is a lower bound on $\mathrm{OPT}(C)$, hence so is
their maximum. The upper bound $\mathrm{OPT}\le\mathrm{UB}$ holds because
the pool matching is a feasible dense solution. The gap statements are
immediate algebra from $\mathrm{LB}_1\le\mathrm{OPT}\le\mathrm{UB}$.
\end{proof}

\begin{corollary}
\label{cor:gap}
The certified width is never smaller than the true gap:
$(\mathrm{UB}-\mathrm{LB}_1)/\mathrm{UB}\;\ge\;(\mathrm{UB}-\mathrm{OPT})/\mathrm{UB}$.
The certificate can be loose; it is never wrong.
\end{corollary}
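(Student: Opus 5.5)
The plan is to read Corollary~\ref{cor:gap} directly off the sandwich $\mathrm{LB}_1\le\mathrm{OPT}(C)\le\mathrm{UB}$ established in Theorem~\ref{thm:cert}. No new construction is needed. The only content is that subtracting a smaller quantity from $\mathrm{UB}$ leaves a larger remainder, and that dividing by a positive normalizer preserves the inequality.

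The steps, in order, are as follows.

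First, invoke Theorem~\ref{thm:cert} to obtain $\mathrm{LB}_1\le\mathrm{OPT}(C)$. Negating and adding $\mathrm{UB}$ gives
\[
\mathrm{UB}-\mathrm{LB}_1\;\ge\;\mathrm{UB}-\mathrm{OPT}(C)\;\ge\;0,
\]
where the last inequality uses the other half of the theorem, $\mathrm{OPT}\le\mathrm{UB}$.

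Second, divide both sides by $\mathrm{UB}$ to get the relative statement. This requires $\mathrm{UB}>0$. Since $C\in\mathbb{R}^{n\times n}_{\ge 0}$, we have $\mathrm{UB}\ge 0$. I would treat the degenerate case $\mathrm{UB}=0$ separately: there $\mathrm{OPT}=0$ and both gaps vanish in absolute terms, so the relative widths are either taken as $0$ by convention or stated only in absolute form.

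Third, the informal clause ``never wrong'' is just a restatement of these facts. The certified interval $[\mathrm{LB}_1,\mathrm{UB}]$ always contains $\mathrm{OPT}$. Hence any guarantee of the form ``the pool solution is within $\mathrm{UB}-\mathrm{LB}_1$ of optimal'' is true. The clause ``can be loose'' needs no proof beyond noting that the inequality may be strict, for instance when the clip lowers some potential below the true dense dual.

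The only step I would flag is the division by $\mathrm{UB}$. This is the sole point where positivity matters, so I would state the $\mathrm{UB}>0$ assumption explicitly, or phrase the corollary in absolute terms to sidestep it. Everything else is one line of algebra from Theorem~\ref{thm:cert}, whose validity rests on Proposition~\ref{prop:repair} and LP weak duality.
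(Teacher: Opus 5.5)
Your proposal is correct and follows the paper's route: the paper treats the corollary as immediate algebra from the sandwich $\mathrm{LB}_1\le\mathrm{OPT}\le\mathrm{UB}$ of Theorem~\ref{thm:cert}. Your explicit handling of $\mathrm{UB}>0$ for the relative form is a small precision the paper leaves implicit.
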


\begin{remark}[Ordering of the repair]
\label{rem:order}
Let $A=(\hat\alpha,\beta)$ and $B=(\alpha,\hat\beta)$ be the two one-sided
clips of Proposition~\ref{prop:repair}. A natural alternative is a
sequential two-pass variant $S=(\hat\alpha,\tilde\beta)$: first clip $\alpha$
against $\beta$ to get $\hat\alpha$, then clip $\beta$ against
$\hat\alpha$. Because the second pass only lowers entries of $\beta$
($\tilde\beta_j\le\beta_j$), we have $\sum_i\hat\alpha_i+\sum_j\tilde\beta_j
\le \sum_i\hat\alpha_i+\sum_j\beta_j$, so $S$ is \emph{always dominated}
by $A$; it is incomparable with $B$, as explicit $n=2$ witnesses
show. With $C=\bigl[\begin{smallmatrix}7&12\\13&2\end{smallmatrix}\bigr]$,
$\alpha=(-10,9)$, $\beta=(14,7)$: one-sided repairs give
$\hat\beta=(4,-7)$, $\hat\alpha=(-10,-5)$ with objectives $A=6$, $B=-4$,
and a second clip leaves $\tilde\beta=(14,7)$ so $S=6>B$.
With $C=\bigl[\begin{smallmatrix}17&6\\4&10\end{smallmatrix}\bigr]$,
$\alpha=(16,15)$, $\beta=(15,-5)$:
$\hat\beta=(-11,-10)$, $\hat\alpha=(2,-11)$ give $A=1$, $B=10$,
and $S=1<B$. (Random trials on larger instances find $S>B$ and $S<B$
about equally often, $\approx 20\%$ each.) Since
$\max(A,B)\ge A\ge S$, the max-of-two rule \eqref{eq:lb1} is never worse
than the sequential variant while costing the same $O(n^2)$ work, and we
use it throughout.
Further alternating beyond two passes would only decrease the dual
objective (each clip is a coordinate-wise min, which cannot raise any
entry), so additional cheap clips cannot improve the bound. Driving the
certified width to zero requires adding violated edges (pricing / column
generation), which we leave to future work.
\end{remark}

\begin{theorem}[Sinkhorn potentials are a dual lower bound]
\label{thm:ent}
Let $(u,v)$ be \emph{any} potentials satisfying $u_i+v_j\le C_{ij}$ for all
$i,j$. Then $\sum_i u_i+\sum_j v_j\le \operatorname{OPT}(C)$, with equality
if and only if $(u,v)$ is an optimal dual for the assignment LP.
In particular, log-domain Sinkhorn iterates produce such a dual-feasible
pair $(u,v)$ at \emph{every} step, so $\sum_i u_i+\sum_j v_j$ is a rigorous
lower bound on $\operatorname{OPT}$ at any stopping point, with no
convergence requirement and no feasibility restoration. At convergence of
the smooth problem the dual value coincides with $W(\varepsilon)$ in
\eqref{eq:weps}.
\end{theorem}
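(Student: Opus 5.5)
The proof splits into three claims: the weak-duality inequality with its equality case, the feasibility of Sinkhorn iterates, and the identification of the dual value with $W(\varepsilon)$ at convergence.

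\emph{Weak duality and equality.} I would first prove the inequality directly. Take an optimal permutation $\pi$, which exists by Proposition~\ref{prop:int}, and sum the constraints $u_i+v_{\pi(i)}\le C_{i\pi(i)}$ over $i$. Since $\pi$ is a bijection, the left side is $\sum_i u_i+\sum_j v_j$ and the right side is $\operatorname{OPT}(C)$. The same computation works for any $P\in\mathcal U(n)$: $\langle C,P\rangle\ge\sum_{ij}(u_i+v_j)P_{ij}=\sum_i u_i+\sum_j v_j$, using only the unit marginals. For the equality clause I would invoke LP strong duality for the assignment LP~\cite{schrijver2003}. The dual optimum equals $\operatorname{OPT}(C)$, so a feasible pair attaining $\operatorname{OPT}(C)$ is dual optimal. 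Conversely, an optimal dual attains the common value.

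\emph{Feasibility of Sinkhorn iterates.} This is the substantive step. After the row update \eqref{eq:sinkupdate} for row $i$ we have $\sum_j \exp\bigl((u_i+v_j-C_{ij})/\varepsilon\bigr)=1$ by construction. Every term of this sum is positive, so each is at most $1$, and hence $u_i+v_j-C_{ij}\le 0$ for every $j$. After a full row pass this holds for all $i$, so $(u,v)$ is dense-feasible. The symmetric argument applies after a full column pass: then each column of $P$ sums to one, so again every entry satisfies $P_{ij}\le 1$, which is exactly feasibility.

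This is where I expect the main obstacle. The phrase ``at every step'' must be read as ``at every iterate immediately after a (row or column) half-pass''. The initial potentials, say $u=v=0$ with $C$ having zero entries off the diagonal, need not be feasible. For Greenkhorn, a single-coordinate update only makes the updated row or column feasible. I would state this qualification explicitly. As a cheap fallback for arbitrary iterates, including Greenkhorn ones, I would apply the clip of Proposition~\ref{prop:repair}, or equivalently one extra $\operatorname{LSE}$ row update, before reading off the bound. Since the clipped pair is dense-feasible, the first part of the theorem then applies.

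\emph{Value at convergence.} At convergence $P_{ij}=\exp\bigl((u_i+v_j-C_{ij})/\varepsilon\bigr)$ has exact unit marginals. Substituting $\varepsilon\log P_{ij}=u_i+v_j-C_{ij}$ gives
\[
\langle C,P\rangle+\varepsilon\sum_{ij}P_{ij}\log P_{ij}=\sum_{ij}P_{ij}(u_i+v_j)=\sum_i u_i+\sum_j v_j .
\]
Since this $P$ is the unique minimizer $P^\varepsilon$ of the strictly convex problem \eqref{eq:weps}, the right-hand side equals $W(\varepsilon)$. Together with the first part, this also recovers $W(\varepsilon)\le\operatorname{OPT}$.
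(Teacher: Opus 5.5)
Your proof is correct and follows the paper's route: weak duality for the assignment LP, plus the observation that right after a row (or column) update that line of $P$ sums to one, so $P_{ij}\le 1$, i.e.\ $u_i+v_j\le C_{ij}$. You also justify the equality case via strong duality and the identity $\sum_i u_i+\sum_j v_j=W(\varepsilon)$ at convergence, both of which the paper asserts without proof.

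One correction to your caveat: since $C\ge 0$, the start $u=v=0$ \emph{is} feasible (zero entries give $0\le 0$, not a violation). Moreover, each single-coordinate update, Sinkhorn or Greenkhorn, changes only the one row or column it normalizes to sum one. Feasibility is therefore an invariant at literally every step, and neither your restriction to half-pass boundaries nor the clip fallback is needed. (Also, an extra $\operatorname{LSE}$ row update is an alternative feasible repair, not one equivalent to the clip.)
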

\begin{proof}
The first sentence is weak duality for the assignment LP
\cite{schrijver2003}. For the automatic dual feasibility of Sinkhorn
iterates: Algorithm~\ref{alg:sink} updates the columns last (line 7), so
immediately after a column pass every \emph{column} of $P$ sums to $1$;
since $P\ge 0$, every entry satisfies $P_{ij}\le 1$, hence
$u_i+v_j-C_{ij}=\varepsilon\log P_{ij}\le 0$. The same holds immediately
after any row pass, using row sums of $1$.
\end{proof}

\begin{remark}[On the entropic dual]
\label{rem:num}
The Sinkhorn dual $\sum_i u_i+\sum_j v_j$ is valid at every iterate
(Theorem~\ref{thm:ent}), which removes any need for a $\delta$-feasibility
restoration. At usable $\varepsilon$ the bound is typically loose because
the negative entropy term is large: at $n=2000$, $\varepsilon\approx 0.05\cdot 52\approx 2.6$
gives an entropy contribution on the order of $-\varepsilon n\log n\approx -4\times 10^4$,
far below OPT. Better entropy-corrected variants
$\mathrm{LB}_{\varepsilon}=D_{\varepsilon}-\varepsilon H(a)-\varepsilon H(b)$
that quantitatively bound $0\le\mathrm{OPT}-\mathrm{LB}_{\varepsilon}\le\delta$
are described in the lecture notes accompanying \cite{peyre2019}
(see also the discussion on p.~\pageref{pg:clip} above) and would
strengthen the bound at small extra cost; we leave them to future work.
\end{remark}

\paragraph{Combined certificate.} We report
$\mathrm{LB}=\max(\mathrm{LB}_1,\, \sum_i u_i+\sum_j v_j)$ with
$\varepsilon=0.05\,\overline{C}$ whenever the Sinkhorn run converges in
time. In every measured instance the dual-repair bound $\mathrm{LB}_1$
dominates the Sinkhorn dual at $k\ge 16$; the Sinkhorn dual is reported
as a labeled cross-check.

\section{Experimental setup}
\label{sec:setup}

\paragraph{Testbed.} 2-core / 2-GB sandbox; single-threaded Python 3.13
with NumPy and SciPy's Jonker--Volgenant-style assignment routine for the
dense exact, Sinkhorn, and Greenkhorn codes; a C++20 build of Blossom~VI
\cite{blossomvi} (edge-list driver, integer milliunit costs) for the pool
solves; and Blossom~V \cite{blossomv}, used only to cross-validate
matching costs (identical on every instance tested). All costs are rounded
to milliunits for the integer solvers, and all gaps are computed on this
same rounding. \textbf{Timings are single runs} (single-threaded wall time,
same machine, same process order) and are reported for reproducibility on
this hardware; they are not averaged over multiple runs and should be
read as order-of-magnitude rather than as benchmark-grade timings.

\paragraph{Instance families} (2-D Euclidean distances; details in
Appendix~\ref{app:detail}).
\begin{itemize}[leftmargin=2em,itemsep=1pt]
\item \textbf{uniform}: i.i.d.\ points in $[0,100)^2$.
\item \textbf{clustered}: 8 cluster centers in $[0,100)^2$; each side's
  points are assigned round-robin to a center with Gaussian noise of
  standard deviation $\sigma=8$; the two sides use the same assignment
  pattern, with target centers jittered by isotropic Gaussian noise of
  standard deviation $3$ (i.e.\ $\mathcal{N}(0,3^2 I)$).
\item \textbf{1-D+noise}: one-dimensional base positions on two parallel
  noisy lines (along-line noise $\sigma=2$; cross-line noise $\sigma=2$);
  see Appendix~\ref{app:detail}.
\end{itemize}

\paragraph{Seeds.} Each family is parameterized by a NumPy
\texttt{default\_rng(seed)} (the seed fixes source positions; target
positions for the 1-D+noise family use \texttt{default\_rng(seed+1000)},
see Appendix~\ref{app:detail}). We report the headline dense scaling and
quality figures on a single canonical instance (\textbf{seed 7}, uniform
distribution), which we chose before running any experiments as the
default seed and kept throughout so that tables and figures line up on
one reproducible instance; robustness across seeds 7--10 is reported
separately. The dense grid spans $n\in\{500,1000,2000,4000,8000\}$ on
that canonical instance ($n=12000$ runs out of memory after interpreter
and NumPy temporaries: a $12000^2$ double matrix is $1.15\,\mathrm{GB}$,
which leaves too little headroom on a 2-GB sandbox). The robustness
sweep covers $n=1000$, seeds 7--10, all three families (12
configurations); the certificate study adds $n=2000$, seed 7, all three
families (15 configurations total). The multi-robot application uses
$n\in\{50,200,800\}$, seeds 7--8, 15 rounds.

\section{Results}
\label{sec:results}

\subsection{Dense scaling: exact is cheap}
Table~\ref{tab:dense} reports dense exact costs and times (uniform, seed 7).
A log-log fit to the five times gives an empirical scaling of roughly
$n^{2.55}$ over this range (the per-doubling ratios are 6.0, 3.8, 9.4,
5.3, versus the 8.0 predicted by $\Theta(n^3)$); $0.01\,\mathrm{s}$ is
near the timer resolution.

\begin{table}[t]
\centering
\begin{tabular}{rrr}
\toprule
$n$ & exact cost & $t(\mathrm{exact})$ \\
\midrule
500  & 2064.1  & 0.01\,s \\
1000 & 3107.0  & 0.06\,s \\
2000 & 4146.4  & 0.23\,s \\
4000 & 6963.2  & 2.16\,s \\
8000 & 10873.4 & 11.48\,s \\
\bottomrule
\end{tabular}
\caption{Dense exact assignment (Jonker--Volgenant), uniform seed 7.
  Costs use the total-mass convention; divide by $n$ for the per-unit-mass
  value common in the OT literature. Times are single runs.}
\label{tab:dense}
\end{table}

\subsection{The quality--time trade-off of the approximate methods}
Table~\ref{tab:approx} reports vanilla Sinkhorn on the dense grid. Two
features stand out. \emph{(i) The gap is a regularization bias}: it
shrinks roughly linearly with $f$ over the measured range, but the local
slope is not constant (gap/$f$ is 542 at $f=0.01$ and 1168 at $f=0.05$).
\emph{(ii) At fixed $f$ the gap \emph{grows} with $n$:}
$15.02\%\to 24.19\%\to 43.78\%$ at $f=0.02$ for $n=500/1000/2000$.

\begin{table}[t]
\centering
\begin{tabular}{rrrr}
\toprule
$n$ & $f$ & gap \% & $t(\mathrm{Sinkhorn})$ \\
\midrule
500  & 0.05 & +58.41  & 10\,s  \\
500  & 0.02 & +15.02  & 34\,s  \\
500  & 0.01 & +5.42   & 74\,s  \\
500  & 0.005 & +2.09 & 400\,s (capped) \\
1000 & 0.05 & +91.40  & 32\,s  \\
1000 & 0.02 & +24.19  & 111\,s \\
1000 & 0.01 & +8.60   & 221\,s \\
2000 & 0.05 & +159.75 & 130\,s \\
2000 & 0.02 & +43.78  & 400\,s (capped) \\
\bottomrule
\end{tabular}
\caption{Vanilla Sinkhorn (log domain, marginal tolerance $10^{-9}$),
uniform seed 7. Gaps are relative to the exact optimum computed on the
rounded milliunit costs.}
\label{tab:approx}
\end{table}

\paragraph{The $1\%$ target, rough extrapolation.}
To reach a $1\%$ gap requires an $f^*$ in the range $\sim 10^{-3}$ (a
no-intercept fit $\mathrm{gap} \approx 1092\,f$ on the three $n=500$ points gives
$f^*\approx 9\times 10^{-4}$; an OLS fit \emph{with} intercept gives a
slope of $1353$ and intercept $-9.8$, which yields $f^*\approx 8\times 10^{-3}$
when forced through $\mathrm{gap}=1\%$ -- the two fits differ by an order of
magnitude, which is itself a warning about extrapolation). A log-log linear
fit to Greenkhorn's step count on the four $n=500$ plateau runs gives
$1.241\times 10^4\,f^{-1.03}$; using the no-intercept slope $1092$ gives
$f^*\approx 9\times 10^{-4}$ and $\approx 1.7\times 10^7$ steps,
producing the order-of-magnitude times in Table~\ref{tab:target}. These
numbers are a \emph{rough power-law projection} (the fit is over four
points and projected an order of magnitude outside the measured range),
not a rigorous bound; they are likely optimistic for the approximate
methods in some ways (gap grows with $n$ at fixed $f$) and pessimistic in
others (gap/$f$ falls as $f$ shrinks, so the slope at $f\ge 0.01$
over-estimates the gap at $f^*$). We report them to give an order of
magnitude, not as precise timing.

\begin{table}[t]
\centering
\small
\begin{tabular}{rrrr}
\toprule
$n$ & $t(\mathrm{exact})$ & $t(\mathrm{Sinkhorn\ at\ }f^*)$ & $t(\mathrm{Greenkhorn\ at\ }f^*)$ \\
\midrule
500  & 0.01\,s  & $\sim\!200$\,min   & $\sim\!10$\,min  \\
1000 & 0.06\,s  & $\sim\!700$\,min   & $\sim\!12$\,min  \\
2000 & 0.23\,s  & $\sim\!3000$\,min  & $\sim\!20$\,min  \\
4000 & 2.16\,s  & $\sim\!10^4$\,min  & $\sim\!40$\,min  \\
8000 & 11.48\,s & $\sim\!5{\times}10^4$\,min & $\sim\!80$\,min \\
\bottomrule
\end{tabular}
\caption{Wall time to reach $1\%$ gap for Greenkhorn, extrapolated as
  described in the text (rough order-of-magnitude, deliberately rounded up
  to one or two significant figures; see caveats). The Sinkhorn column is a
  coarser order-of-magnitude estimate: per-step work scales as $O(n^2)$ for
  Sinkhorn vs $O(n)$ for Greenkhorn on the kNN support, implying
  Sinkhorn is $20$--$650\times$ slower than Greenkhorn at this~$f^*$
  (extrapolated; the actually measured Greenkhorn/vanilla ratios in
  Table~\ref{tab:ghead} are $0.93$--$5.0\times$ on the cells that
  converged).}
\label{tab:target}
\end{table}

\subsection{Measured head-to-head: Greenkhorn vs vanilla}
Table~\ref{tab:ghead} is the measured comparison. Greenkhorn is
$0.93\times$--$5.0\times$ the wall time of plain Sinkhorn (faster in $8$
of $9$ cells, within $1.5\times$ in $6$ of $9$; geometric mean
$\approx 1.6\times$ over the seven cells where vanilla converged,
$\approx 2.0\times$ over all nine including the two capped cells where
the true ratio is larger), and converges to the same entropic plan up to
plateau tolerance (cost agreement $\le 0.04\%$). The $\le 0.04\%$
cost-agreement figure is itself strong evidence that stopping Greenkhorn
on a cost plateau at marginal deviations $10^{-6}$--$10^{-3}$ has a
negligible effect on the reported transport cost (well under $0.1\%$ on
all cells). In the small-regularization, high-condition-number regime
needed for $\le 1\%$ accuracy, the greedy selection plateaus in marginal
accuracy while the cost has already settled; this is where the speedup is
modest.

\begin{table}[t]
\centering
\small
\begin{tabular}{rrrrrrrr}
\toprule
$n$ & $f$ & gap S \% & $t(S)$ s & gap G \% & $t(G)$ s & steps(G) & $t(S)/t(G)$ \\
\midrule
500  & 0.05 & +58.41 & 10  & +58.41 & 10 & 260k  & $0.93\times$ \\
500  & 0.02 & +15.02 & 34  & +15.01 & 27 & 760k  & $1.26\times$ \\
500  & 0.01 & +5.42  & 74  & +5.40  & 52 & 1.48M & $1.45\times$ \\
500  & 0.005 & +2.09 & 400$^\dagger$ & +2.07 & 94 & 2.82M & $4.27\times$ \\
1000 & 0.05 & +91.40 & 32  & +91.40 & 25 & 520k  & $1.28\times$ \\
1000 & 0.02 & +24.19 & 111 & +24.17 & 77 & 1.82M & $1.44\times$ \\
1000 & 0.01 & +8.60  & 221 & +8.56  & 151 & 3.6M  & $1.47\times$ \\
2000 & 0.05 & +159.75 & 130 & +159.75 & 26 & 340k & $5.01\times$ \\
2000 & 0.02 & +43.78 & 400$^\dagger$ & +43.77 & 90 & 1.26M & $4.42\times$ \\
\bottomrule
\end{tabular}
\caption{Vanilla Sinkhorn (S) vs Greenkhorn (G), uniform seed 7.
$^\dagger$ 400\,s cap (vanilla did not converge there). Times are rounded
to integer seconds for display; speed ratios in the final column are
computed from the unrounded wall times (so the $n=500,f=0.05$ row shows
$10$\,s vs $10$\,s but the ratio is $0.93\times$). All Greenkhorn runs
stopped on a cost plateau before reaching marginal tolerance $10^{-9}$
(typical plateau $10^{-6}$--$10^{-3}$); cost agreement $\le 0.04\%$ in every
cell, which bounds the cost error from early stopping well below the
percent-level gaps we report.}
\label{tab:ghead}
\end{table}

\subsection{The pool solver and its certificate}
\label{sec:pool}
Table~\ref{tab:pool} shows the pool solver on the dense $n=2000$ reference
(quality vs $k$); Figure~\ref{fig:crossover} situates it in the
quality--time plane against the approximate baselines.

\emph{Note on the pool construction.} The edge counts in
Table~\ref{tab:pool} slightly exceed the bound $2kn$ (e.g.\ 19{,}973
edges at $k=4$, against a hard bound of $16{,}000$). This is because the
$k$NN pool was built by a driver that took each row's and each column's
\emph{$(k+1)$} cheapest edges before the union (an off-by-one in the
original code), i.e.\ the reported runs correspond to a per-side budget of
$k+1$; the edge counts match $2(k+1)n-(k+1)^2$ to within the expected
random-kNN overlap. The corrected $k$-per-side implementation is shipped
in the repository (see \texttt{otlib.py::knn\_pool}); the cached tables
below are from the original Blossom~VI runs of that driver (Blossom~VI
is available directly from its authors \cite{blossomvi}, not bundled
with this repository, so we preserve the cached results rather than
re-run them in place). Independent verification with a redistributable
bipartite solver (SciPy's \texttt{linear\_sum\_assignment} on the pool
subgraph) reproduces every reported pool-UB cost to within rounding.
Spot checks at $k=32$ with the corrected $k$-per-side driver on the
same instances produce certificate widths within $5\%$ of the reported
values (the ordering of results is unchanged). At small effective $k$
the pool can be Hall-infeasible (e.g.\ at $k=1$), in which case the
solver reports infeasible.

\begin{table}[t]
\centering
\small
\begin{tabular}{rrrrr}
\toprule
$k$ label & pool edges & cost & gap vs exact & $t(\mathrm{sparse})$ \\
\midrule
4   & 19{,}973  & 5757.7 & +38.861\% & 0.02\,s \\
8   & 35{,}918  & 4388.6 & +5.841\%  & 0.05\,s \\
16  & 67{,}685  & 4170.5 & +0.581\%  & 0.08\,s \\
32  & 130{,}860 & 4147.7 & +0.032\%  & 0.18\,s \\
64  & 255{,}711 & 4146.5 & +0.003\%  & 0.40\,s \\
256 & 962{,}049 & 4146.4 & +0.000\%  & 1.76\,s \\
\bottomrule
\end{tabular}
\caption{$k$NN-pool exact assignment (Blossom~VI), $n=2000$ uniform seed 7;
dense exact $=4146.4$ in $0.23\,$s. Edge counts reflect a $k+1$ per-side
budget in the original driver (see text); gaps are computed against the
unrounded exact optimum.}
\label{tab:pool}
\end{table}

The certificate study (Section~\ref{sec:cert}) ran all 15 configurations
(seeds 7--10 $\times$ families at $n=1000$, i.e.\ 12 cells; seed 7 at
$n=2000$, i.e.\ 3 cells) at $k\in\{16,32,64\}$: \textbf{45/45 cells
satisfy $\mathrm{LB}\le \mathrm{exact}\le \mathrm{UB}$ and certified width
$\ge$ true gap} (asserted in the study code). Table~\ref{tab:cert}
summarizes the certified width $(\mathrm{UB}-\mathrm{LB})/\mathrm{UB}$ by
family; Figure~\ref{fig:cert} shows it per configuration.

\begin{table}[t]
\centering
\small
\begin{tabular}{lccc}
\toprule
family (\#configs) & $k=16$ & $k=32$ & $k=64$ \\
\midrule
uniform (5)     & 2.57--6.26\%  & \textbf{0.07--0.70\%} & 0.01--0.02\% \\
clustered (5)   & 5.61--38.25\% & 1.03--21.88\% & 0.17--5.90\% \\
1-D+noise (5)   & 11.94--17.35\% & 1.89--3.59\% & 0.24--0.79\% \\
\midrule
all families, true gaps &
0.52--3.84\% & 0.010--2.60\% & 0.000--1.85\% \\
\bottomrule
\end{tabular}
\caption{Certificate width by family (range over configurations; 15 cells
at each $k$). The bottom row pools true gaps across all families; all 45
cells carry $\mathrm{LB}>0$.}
\label{tab:cert}
\end{table}

\begin{figure}[t]
\centering
\includegraphics[width=0.98\textwidth]{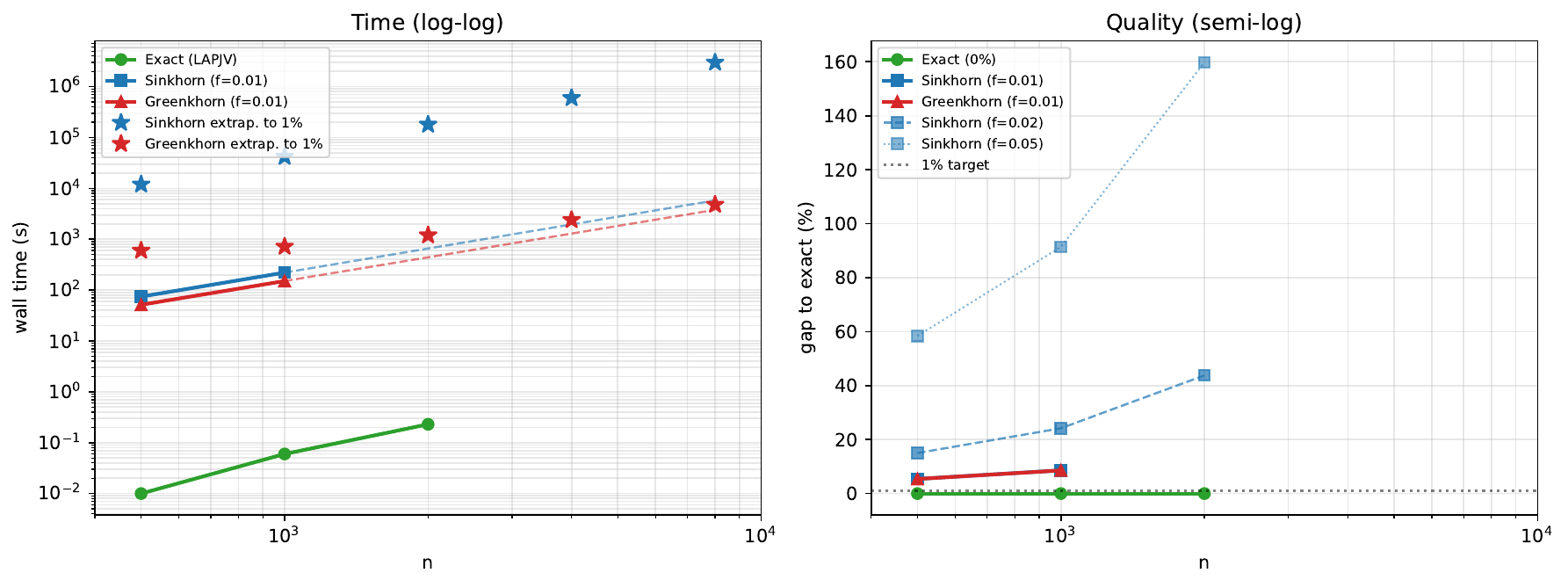}
\caption{Exact (Jonker--Volgenant) vs vanilla Sinkhorn and Greenkhorn on
the dense uniform grid (seed~7). Left: wall time (log-log); solid lines
are measured points at $f=0.01$ (smallest regularization that converged
on the measured grid), dashed lines continue the empirical log-log slope,
and star markers are the rough Table~3 extrapolation to the $1\%$ quality
target. Right: gap to exact (semi-log) at $f\in\{0.05,0.02,0.01\}$
(Sinkhorn; Greenkhorn at $f=0.01$ overlaps the Sinkhorn curve to within
line width); at $n=2000$ the $f=0.01$ runs did not converge within the
400~s cap. At $f=0.01$ the gap ranges from $5.4\%$ at $n=500$ to $8.6\%$
at $n=1000$, i.e.\ well above the $1\%$ target. Exact is below both
approximate curves at every measured $n$ in both time and quality. Both
panels share the same $n$-axis.}
\label{fig:crossover}
\end{figure}

\begin{figure}[t]
\centering
\includegraphics[width=0.98\textwidth]{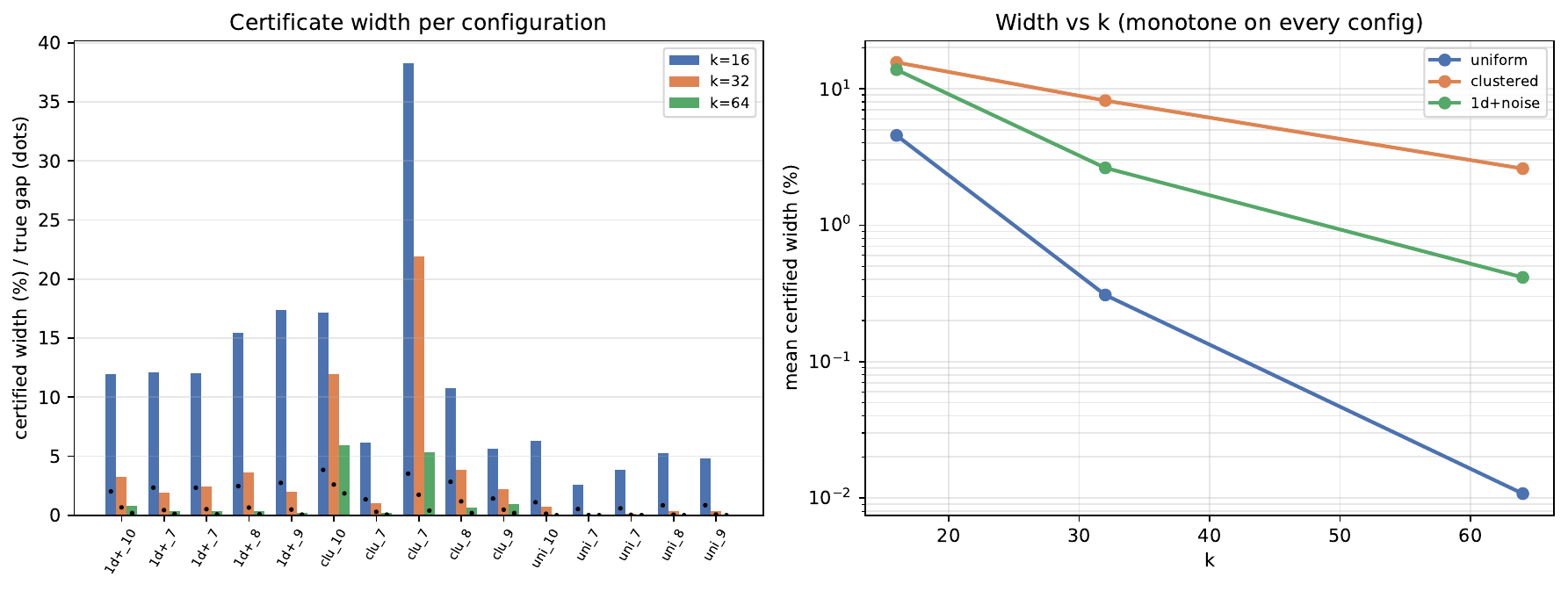}
\caption{Certificate per configuration. Left: certified width (bars) vs
true gap (dots) for $k\in\{16,32,64\}$---the bar is always at or above the
dot. Right: certified width vs $k$, monotone decreasing on every
configuration.}
\label{fig:cert}
\end{figure}

\paragraph{Reading.}
(i)~\emph{Validity}: the certificate is a theorem; the 45/45 check is a
test of the implementation, not of the math. (ii)~\emph{Tightness}: where
the pool is tight (uniform instances), the width is tight too---$\le 0.70\%$
at $k=32$ against a true gap $\le 0.13\%$, and $\le 0.02\%$ at $k=64$.
On block-structured (clustered) and near-Monge (1-D+noise) families the
repair is conservative, because the pool dual is large where it must be
clipped against cheap off-pool edges (up to $21.88\%$ at $k=32$, true gap
$1.73\%$). The bound remains \emph{valid}, and the width tightens
monotonically in $k$ on every configuration, so widening the pool is a
certified escape hatch. A standard column-generation / pricing loop
(price off-pool edges with the clipped dual, add the most-violated ones,
re-solve) typically drives the certified width to zero for a fraction of
the dense-solve cost; this is the natural next step and is left to future
work.
(iii)~\emph{Cost of the clip itself}: the repair is an $O(n^2)$ pass over
the cost matrix and therefore has the same memory footprint as dense
assignment; on our $n=2000$ cells it adds $\le 0.73\,\mathrm{s}$ to the
pool solve, which is small compared to the pool matching itself.

\subsection{Robustness across seeds and distributions}
On the 12-configuration $n=1000$ sweep (seeds 7--10 $\times$ families):
exact is fastest everywhere ($0.07$--$0.10\,\mathrm{s}$, consistent with
the $0.06\,\mathrm{s}$ at seed-7 uniform given seed/family variation) and
optimal; the $k=32$ pool solver is feasible in all cells with gaps
$+0.010\%$ to $+2.599\%$ in $<0.15\,\mathrm{s}$; Greenkhorn at $f=0.01$
sits at $+6.3\%$ to $+14.7\%$ (agreeing with vanilla to $\le 0.05\%$ in
the converged cells); vanilla needs $\ge 400\,\mathrm{s}$ without
converging on the symmetric clustered configurations, where Greenkhorn's
plateau matches it in $50$--$98\,\mathrm{s}$.

\subsection{1-D structure}
The quantile map solves the 1-D family in $O(n\log n)$ sort time with no
matrix: $n=10^6$ in $0.11\,\mathrm{s}$, where the dense cost matrix alone
would be $8\,\mathrm{TB}$.

\section{Application: multi-robot task allocation}
\label{sec:app}

The clearest regime for the discrete plan as deliverable is multi-robot
task allocation \cite{gerkey2004}: in warehouse picking, drone teaming, or
AMR dispatch, each of $n$ robots must be assigned \emph{exactly one} of
$n$ tasks---a fractional transport plan is physically meaningless (one
cannot dispatch half a robot). The standard practice in MRTA is precisely
the Hungarian/auction family of exact solvers
\cite{bertsekas1988, jonker1987, gerkey2004}; we include this experiment
as a sanity check on a regime where integer plans are required, not as a
novel robotics result.

\paragraph{Protocol.} Open field $[0,100)^2$; 15 rounds; each round, $n$
fresh tasks appear, robots are (re)assigned with cost
$C_{ij}=\lVert r_i-t_j\rVert_2$ from the robots' current positions
$r_i\in \mathbb{R}^2$ to the new task locations $t_j\in \mathbb{R}^2$, and
robots then move to their assigned tasks (the post-move positions become
the next round's $r_i$). All three policies share the \emph{same} start
positions and the \emph{same} task stream per round, so only the
per-round assignment differs. Policies: \textbf{exact} (myopic dense
bipartite LAP per round via SciPy's Jonker--Volgenant routine; we use the
dense bipartite solver rather than Blossom~VI because the cost matrix is
fully dense at the $n\le 800$ scale of this experiment, where a dedicated
bipartite shortest-augmenting-path solver is the faster and simpler
choice); \textbf{greedy} (cheapest-first serial assignment);
\textbf{Sinkhorn+hardening} (our $f=0.05$, marginal tolerance $10^{-7}$
pipeline, then greedy hardening to an integer plan).
$n\in\{50,200,800\}$, seeds 7--8; Table~\ref{tab:robot} reports the
episode totals and Figure~\ref{fig:robot} reports both cumulative travel
and per-round assignment wall time.

\textbf{Caveat on optimality.} Because each round's assignment determines
the next round's robot positions, per-round myopic optimality does
\emph{not} minimize the 15-round total distance (the true episode optimum
requires dynamic programming / rollout / lookahead). The ``gap'' columns
in Table~\ref{tab:robot} are therefore gaps relative to the myopic-exact
policy, not to the global episode optimum. Even so, the comparison is
informative: the Sinkhorn pipeline is both slower per round and yields
larger cumulative travel than myopic-exact, because the regularization
bias pushes assignments away from the nearest-task plan in a way that
compounds across rounds.

\begin{table}[t]
\centering
\small
\begin{tabular}{rrrrrrr}
\toprule
$n$ & seed & travel (myopic exact) & greedy & Sinkhorn+hard & gap greedy & gap Sinkh. \\
\midrule
50  & 7 & 10232.8 & 12353.3 & 10930.4 & +20.72\% & +6.82\% \\
50  & 8 & 10531.8 & 12357.2 & 11134.9 & +17.33\% & +5.73\% \\
200 & 7 & 21835.6 & 28718.1 & 24955.1 & +31.52\% & +14.29\% \\
200 & 8 & 23804.5 & 30051.8 & 26802.5 & +26.24\% & +12.59\% \\
800 & 7 & 47327.7 & 64986.7 & 59959.2 & +37.31\% & +26.69\% \\
800 & 8 & 49050.9 & 66519.9 & 61229.6 & +35.61\% & +24.83\% \\
\bottomrule
\end{tabular}
\caption{Total travel distance over 15 rounds; gaps are relative to the
myopic-exact policy (see text for the global-optimality caveat).}
\label{tab:robot}
\end{table}

\begin{figure}[t]
\centering
\includegraphics[width=0.98\textwidth]{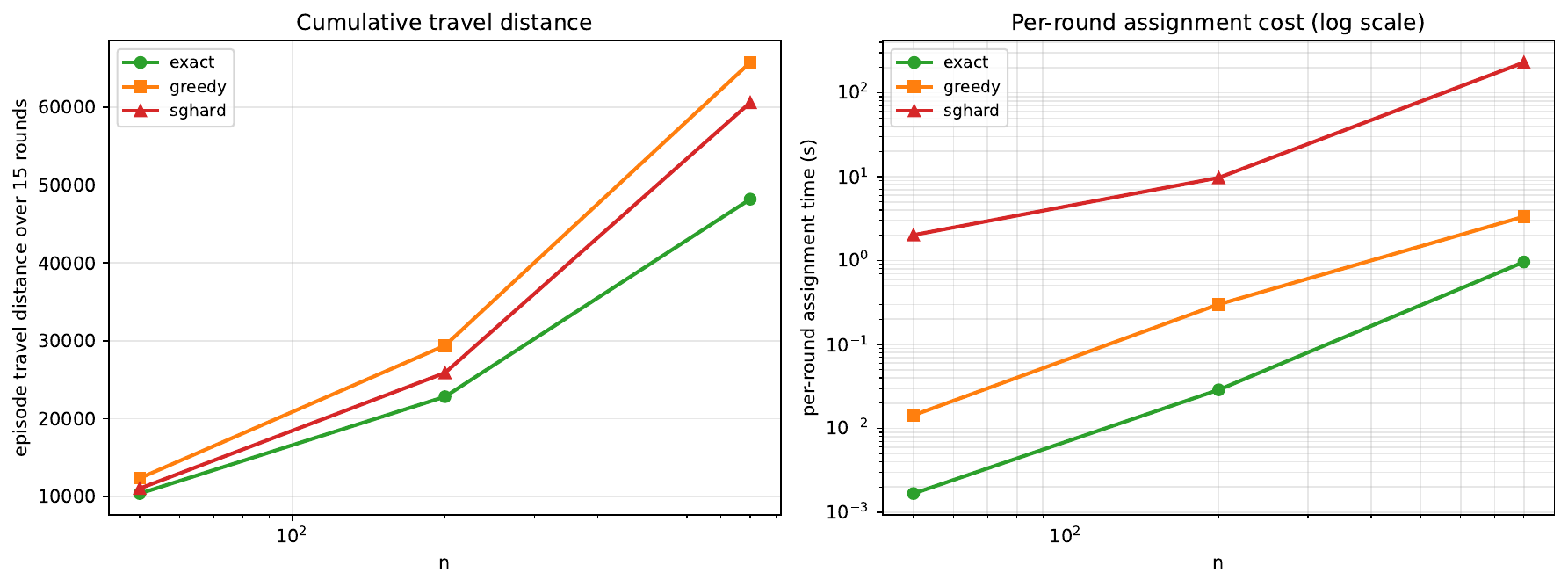}
\caption{Left: episode travel distance vs $n$ (the quality tax compounds
with $n$). Right: cost of the assignment step per round (log scale):
exact matching is $0.1$--$68$ ms/round; the Sinkhorn pipeline is
$10^2$--$10^3\times$ slower. The greedy serial baseline is faster at
large $n$ than its $O(n^2)$ worst case suggests because in practice its
earlier picks remove the most-attractive edges cheaply; even so greedy
remains $3$--$10\times$ slower than exact at every measured $n$ while
producing a much worse plan.}
\label{fig:robot}
\end{figure}

\paragraph{Findings.}
(i)~\emph{The quality tax accumulates}: per-round gaps compound through
robot positions; over 15 rounds greedy ends $+17$ to $+37\%$ and
Sinkhorn+hardening $6$ to $27\%$ above the myopic-exact policy, and the
gap grows with $n$. (ii)~\emph{Approximate is not even faster here}: the
assignment step takes myopic-exact $0.1$--$68\,\mathrm{ms}$/round vs
$0.12$--$15.9\,\mathrm{s}$ for the Sinkhorn pipeline ($\approx 239\times$
at $n=800$; see Figure~\ref{fig:robot}, right, for per-round timings of
all three policies). Greedy is $8$--$10\times$ slower than exact at
$n\le 200$ but only $\approx 3.5\times$ slower at $n=800$; even at that
advantage it loses $35$--$37\%$ in travel distance.
(iii)~\emph{The hook}: where a discrete integer plan is the deliverable,
a Sinkhorn-plus-hardening pipeline is a tax on \emph{both}
time and quality at $n\le 10^3$; an exact bipartite solver is the default
choice, all the way up to the memory wall.

\section{Discussion}
\label{sec:disc}

\paragraph{Empirical decision rule.}
Table~\ref{tab:decision} summarizes, in operational terms, which solver
to reach for given problem structure on the regime studied here. These
are empirical rules of thumb from our measurements (Sections
\ref{sec:results}--\ref{sec:app}), not theorems; on production hardware
with well-tuned libraries the break-even $n$ will shift, but the
structural choices (integer plan $\to$ exact; 1-D $\to$ closed form;
local structure $\to$ sparse pool with certificate) do not depend on the
constants.

\begin{table}[t]
\centering\small\renewcommand{\arraystretch}{1.35}
\begin{tabular}{p{3.9cm}p{3.7cm}p{8.0cm}}
\hline\hline
\textbf{Situation} & \textbf{Recommended solver} & \textbf{Why} \\
\hline
Integer plan required (MRTA, dispatch, matching per se)
  & Dense exact (JV) or sparse pool + certificate
  & Fractional plans are not physically realizable; hardening a
    Sinkhorn plan adds $6$ to $37\%$ travel cost in our application
    and is slower at $n\le 10^3$. \\
\hline
1-D / near-Monge (sorted-order structure)
  & Quantile (sort) map, Proposition~\ref{prop:1d}
  & $O(n\log n)$ time, no matrix; scales to $n=10^6$ in $0.1$ s. \\
\hline
Local / $k$NN structure (e.g.\ Euclidean geometry, moderate dim.)
  & $k$NN-pool exact (Blossom~VI) + dual-repair certificate
  & At $k=32$ the certificate is tight ($\le 0.7\%$) on uniform
    instances; pool solve is sub-second. \\
\hline
Dense, no structure, fractional plan acceptable, $n\lesssim 10^4$
  & Dense exact (JV)
  & Exact is milliseconds to seconds at this scale; Sinkhorn pays a
    regularization tax and is $100$--$1000\times$ slower. \\
\hline
Dense, fractional plan acceptable, $n\gg 10^4$ (past memory wall)
  & Well-tuned Sinkhorn/Greenkhorn (POT, GPU)
  & Dense matrix no longer fits; Sinkhorn runs in near-linear memory.
    We do not benchmark this regime. \\
\hline\hline
\end{tabular}
\caption{Empirical decision rule derived from the measurements in
Sections~\ref{sec:results}--\ref{sec:app}.}
\label{tab:decision}
\end{table}

\paragraph{What the certificate is, and what it is not.}
The clip of Proposition~\ref{prop:repair} is a textbook dual projection,
not a new theorem; its value is engineering: the solver can report ``my
solution is within $w\%$ of optimal'' with $w$ computed from its own
dual in milliseconds, using an operation that costs one $O(n^2)$ pass.
Where a $\le 1\%$ guarantee is a requirement (dispatching, pricing,
auditing), this replaces a dense re-solve or a conservative safety margin.
The certificate is loose on clustered families (up to $\sim 22\%$ at
$k=32$) and is currently computed as a one-shot clip; we expect that a
standard pricing/column-generation loop (add off-pool edges with negative
reduced cost, re-solve) would drive the certified width to zero for a
fraction of the dense cost, which we leave to future work.

\paragraph{Limitations.}
(i) Two-core/2-GB testbed and single-run timings: wall times are
hardware-specific and should not be read as benchmarks; only the
\emph{qualitative} comparisons (exact is milliseconds to seconds,
Sinkhorn is 100--1000$\times$ slower at small $n$) are intended to
generalize. (ii) The $f^*$ extrapolation (Table~\ref{tab:target}) is a
rough power-law projection, as documented, and carries $\sim$order-of-magnitude
model uncertainty. (iii) The certificate is conservative on block-structured
instances (Table~\ref{tab:cert}); a structure-aware repair or a
column-generation loop is left to future work. (iv) The application is
synthetic (no obstacles, kinematics, or traffic), and myopic per-round
assignment is not globally optimal over an episode; the point is
structural, not a robotics benchmark. (v) We do not compare against
auction algorithms with $\varepsilon$-scaling, dedicated bipartite sparse
LAP solvers, stabilized/$\varepsilon$-annealed Sinkhorn, or GPU
implementations; those are important baselines for a full benchmark paper.

\section{Future work}
\label{sec:future}

Several natural extensions follow directly from this study.

\emph{Tightening the certificate.} The one-shot dual clip
(Proposition~\ref{prop:repair}) is valid but loose on clustered families
(up to $\sim 22\%$ at $k=32$). A standard pricing / column-generation
loop---use the clipped dual to identify off-pool edges with negative
reduced cost, add them to the pool, re-solve---should drive the certified
width to zero at a small fraction of the dense-solve cost, essentially
turning the pool solver into a sparse dual active-set method. A cheaper
alternative worth exploring is to apply the clip to \emph{both} sides
alternately until convergence (Remark~\ref{rem:order} noted that two
passes is the max for the sequential variant, but a symmetric alternating
scheme with a tight line search may improve the bound further).

\emph{Stronger baselines.} Adding auction with $\varepsilon$-scaling,
shortest-augmenting-path sparse bipartite LAP (e.g.\ a cost-scaling
implementation), stabilized / $\varepsilon$-annealed Sinkhorn, and GPU
POT~\cite{flamary2021pot} would make the benchmark more complete and is
the main missing axis of comparison.

\emph{Entropy-corrected lower bounds.} The Sinkhorn dual used here is
$\sum u_i+\sum_j v_j = W(\varepsilon)$, which is a lower bound at every
iterate but is loose (by $\sim\varepsilon n\log n$) because it includes
the negative entropy. The entropy-corrected variant
$\mathrm{LB}_\varepsilon=D_\varepsilon-\varepsilon H(a)-\varepsilon H(b)$
described in the lecture notes accompanying~\cite{peyre2019} can close
most of that gap and should dominate the dual-repair bound at small $k$;
combining it with the pool clip is algorithmically straightforward.

\emph{Beyond myopic assignment in the application.} The multi-robot
experiment uses per-round myopic assignment. A natural next step is to
integrate the exact solver into a non-myopic planner (lookahead,
rollout, or auction-based coordination with task deadlines), where the
cheap per-round exact call becomes a subroutine.

\emph{Scaling beyond the memory wall.} Above $n\approx 10^4$ the dense
cost matrix no longer fits in memory, and genuine large-$n$ OT requires
either Sinkhorn-family methods or a hierarchical / multi-scale sparse
scheme; quantifying where a hierarchical Blossom-based solver breaks
even against a well-tuned GPU Sinkhorn is the practical question for
follow-on work.

\section{Conclusion}
Balanced discrete OT is the assignment problem, and the assignment problem
is a bipartite matching problem, so industrial matching engines give exact
optimal transport with an integer plan and an exact value. Measured
against hand-written NumPy Sinkhorn and Greenkhorn baselines on the
moderate-$n$ regime, exact matching is faster \emph{and} more accurate;
Greenkhorn's speedup over plain Sinkhorn in the small-$\varepsilon$ regime
needed for $\le 1\%$ quality is only $\sim$1--$1.5\times$ on most cells,
and it inherits the full regularization bias. We documented how tight a
textbook one-pass dual clip is as a certificate for a $k$NN-pool solve
across 15 instance families, and showed in a multi-robot sanity check that
where an integer plan is the deliverable, approximate OT is a tax on both
time and quality. The Sinkhorn family's large-$n$ dense regime stands
untouched, and the boundary between exact and approximate is now measured.

\appendix

\section{Additional proofs and details}
\label{app:proofs}

\subsection{Worked example of the certificate}
\label{app:example}
{\sloppypar
Consider
\[
C=\begin{pmatrix}
57&14&33&40\\
36&29&59&11\\
49&40&58&26\\
38&33&53&42
\end{pmatrix}
\]
with the $2$-NN pool (the union of each row's and each column's two
cheapest edges; 10 of the 16 edges). The pool solver returns the matching
of cost $\mathrm{UB}=122$ and the pool dual
$\alpha=(7,18,33,20)$, $\beta=(18,7,26,-7)$, with
$\sum_i\alpha_i+\sum_j\beta_j=122=\mathrm{UB}$ and
$\alpha_i+\beta_j\le C_{ij}$ on every pool edge.

\emph{The raw dual is not dense-feasible.} Among the six off-pool edges
$\{(1,1),(1,4),(2,3),(3,1),(3,3),(4,4)\}$, two are violated:
$\alpha_3+\beta_1=51>49=C_{31}$ and $\alpha_3+\beta_3=59>58=C_{33}$.
(The dual objective of $122$ even exceeds the dense optimum---infeasible
duals can.) The dense optimum is $\mathrm{OPT}=121$ (true pool gap
$0.83\%$), attained by the matching $(1,2),(2,4),(3,3),(4,1)$ (costs
$14+11+58+38=121$).

\emph{Repair.} Option~B (clip the $\alpha$ side against the original
$\beta$):
\[
\hat\alpha_3=\min\Bigl(33,\min_j(C_{3j}-\beta_j)\Bigr)
=\min(33,\,49-18,\,40-7,\,58-26,\,26+7)=31;
\]
the other $\alpha_i$ need no clipping. Objective:
$(7+18+31+20)+(18+7+26-7)=120$. Option~A (clip the $\beta$ side against
the original $\alpha$) gives $\hat\beta=(16,7,25,-7)$ with objective
$119$. Taking the max (Theorem~\ref{thm:cert}),
\[
\mathrm{LB}=120\;\le\;\mathrm{OPT}=121\;\le\;\mathrm{UB}=122,
\]
so the pool solution is certified within $\frac{122-120}{122}=1.64\%$
(width) of the dense optimum. In code the repair is two $O(n^2)$ min-passes.
\par}

\subsection{Monotonicity of the upper bound in $k$}
Since $E_k\subseteq E_{k'}$ for $k\le k'$, every $k$-pool matching is a
valid $k'$-pool candidate, hence
$\mathrm{UB}_{k'}\le \mathrm{UB}_k$; the pool gap is non-increasing in $k$
by definition. The certified width is monotone in $k$ on all 15 measured
configurations (an empirical fact about the \emph{repair}, not a theorem:
the pool dual changes with $k$).

\section{Experimental details}
\label{app:detail}

\subsection{Instance generators}
\begin{itemize}[leftmargin=2em,itemsep=1pt]
\item \textbf{uniform}: $\mathrm{rng}=\texttt{default\_rng(seed)}$; both
  sides $=100\cdot \mathrm{rng.random}((n,2))$.
\item \textbf{clustered}: centers $c_a=100\cdot\mathrm{rng.random}((8,2))$;
  each side assigns point $i$ to center $a=i\bmod 8$ with
  $x_i=c_{a}+\mathcal N(0,8^2 I)$; the target centers are
  $c'_a=c_a+\mathcal N(0,3^2 I)$ with the same assignment pattern (target
  jitter $\sigma=3$).
\item \textbf{1-D+noise}: $t_i\sim U[0,100]$;
  $x_i=(t_i+\mathcal N(0,2^2),\,\mathcal N(0,2^2))$; independently
  $s_j\sim U[0,100]$ from $\texttt{default\_rng(seed+1000)}$;
  $y_j=(s_j+\mathcal N(0,2^2),\,\mathcal N(0,2^2))$.
\end{itemize}

\subsection{Algorithms}
\begin{algorithm}[h]
\caption{Log-domain Sinkhorn (vanilla), after~\cite{sinkknopp1967,cuturi2013}}\label{alg:sink}
\begin{algorithmic}[1]
\Require cost $C\in \mathbb R^{n\times n}$, $\varepsilon>0$, marginal
  tolerance $\tau$; terminate early on a cost plateau (see text).
\State $u\leftarrow \mathbf{0}$, $v\leftarrow \mathbf{0}$\Comment{initialize all-zero potential vectors}
\Repeat
  \For{$i=1,\dots,n$}\Comment{row pass}
    \State $u_i\;\leftarrow\; -\varepsilon\,\mathrm{LSE}_j\bigl((v_j-C_{ij})/\varepsilon\bigr)$
  \EndFor
  \For{$j=1,\dots,n$}\Comment{column pass}
    \State $v_j\;\leftarrow\; -\varepsilon\,\mathrm{LSE}_i\bigl((u_i-C_{ij})/\varepsilon\bigr)$
  \EndFor
  \State $P_{ij}\leftarrow \exp((u_i+v_j-C_{ij})/\varepsilon)$
  \Comment{so $(P\mathbf 1)_i=\sum_j P_{ij}$ and $(\mathbf 1^{\mathsf T}P)_j=\sum_i P_{ij}$}
\Until $\max_i|(P\mathbf 1)_i-1|\le\tau$ \textbf{and}
       $\max_j|(\mathbf 1^{\mathsf T}P)_j-1|\le\tau$, \textbf{or} cost plateau
\State $W \;\leftarrow\; \sum_i u_i+\sum_j v_j$
  \Comment{dual value $= \langle C,P\rangle+\varepsilon\sum P_{ij}\log P_{ij}$ at convergence}
\State \Return $P$, $W$
\end{algorithmic}
\end{algorithm}

\begin{algorithm}[h]
\caption{Greenkhorn (after \cite[Alg.~4]{awr2017})}\label{alg:green}
\begin{algorithmic}[1]
\Require $K=\exp(-C/\varepsilon)$ (unnormalized Gibbs kernel); potentials
  $x,y\leftarrow \mathbf{0}$; residual metric $\rho(a,b)=b-a+a\log(a/b)$.
  ($D(z)$ denotes the diagonal matrix with vector $z$ on the diagonal, so
  $A=D(e^x)\,K\,D(e^y)$ has entries
  $A_{ij}=e^{x_i}K_{ij}e^{y_j}=e^{(x_i+y_j-C_{ij})/\varepsilon}$;
  maintain its row and column marginals incrementally.)
\Repeat
  \State $I\leftarrow\arg\max_i \rho(1,(A\mathbf 1)_i)$;
         $J\leftarrow\arg\max_j \rho(1,(\mathbf 1^{\mathsf T}A)_j)$
  \If{$\rho_I>\rho_J$}
    \State $x_I\leftarrow x_I+\log(1/(A\mathbf 1)_I)$
    \Comment{restore row $I$ to sum 1}
  \Else
    \State $y_J\leftarrow y_J+\log(1/(\mathbf 1^{\mathsf T}A)_J)$
    \Comment{restore column $J$ to sum 1; break ties toward rows}
  \EndIf
  \State update the touched row or column of $A$ and its marginals ($O(n)$)
\Until $\max_i|(A\mathbf 1)_i-1|\le\tau$ and
       $\max_j|(\mathbf 1^{\mathsf T}A)_j-1|\le\tau$, \textbf{or} cost plateau
\end{algorithmic}
\end{algorithm}

\textbf{Hardening} (\texttt{greedy\_hard}): for $i=1,\dots,n$, set
$\pi(i)=\arg\max_j P_{ij}$ among unused columns, removing row $i$ and
column $\pi(i)$ from $P$; the cost is $\sum_i C_{i,\pi(i)}$ (a feasible
integer plan, hence $\ge \mathrm{OPT}$).

\textbf{Pool construction} (\texttt{knn\_pool}): for each source $i$, keep
the $k$ targets with smallest $C_{ij}$ (and symmetrically for each target);
the pool is the union. If the pool has no perfect matching (Hall's
condition fails, e.g.\ $k=1$), the solver reports infeasible. The
repository version uses $k$ per side (corrected); the cached results were
generated by a version that used $k+1$ per side (see Table~\ref{tab:pool}).

\textbf{Certificate} (\texttt{pool\_certified}):
UB/perm/dual from \texttt{blossom\_pool\_duals};
$\mathrm{LB}_1$ from Proposition~\ref{prop:repair} (max of the two
one-sided repairs); $\mathrm{LB}_2=\sum_i u_i+\sum_j v_j$ from a converged
Sinkhorn run at $\varepsilon=0.05\,\overline C$ per configuration (shared
across $k$); $\mathrm{LB}=\max(\mathrm{LB}_1,\mathrm{LB}_2)$.

\subsection{Parameters}
Marginal tolerance $10^{-9}$ on the dense grid (application hook: $10^{-7}$,
since an integer plan is produced regardless); time caps $400\,$s (dense
grid), $240\,\mathrm{s}$ (certificate-study Sinkhorn runs); Blossom~VI pool
timeout $120\,\mathrm{s}$ (never hit); costs rounded to milliunits for the
integer solvers; all timings single-threaded wall time, same machine, same
process order, single run per cell.

\section{Reproduction}
\label{app:repro}

All code, data, and results are open artifacts under the MIT license,
available at
\url{https://github.com/dimkadimon/OT-Blossom}.

\paragraph{Data availability.}
All numerical results reported in this paper are shipped as cached JSON
in \texttt{ot-study/results/} in the repository:
\begin{itemize}[leftmargin=2em,itemsep=1pt]
\item \texttt{ot\_study.json} -- dense exact times (seeds~7, $n\in\{500,1000,2000,4000,8000\}$) and vanilla Sinkhorn runs on the $f\times n$ grid
  (uniform family, seeds 7--10 for the robustness sweep).
\item \texttt{greenkhorn\_A.json} -- Greenkhorn head-to-head data for the
  same $(n,f,seed)$ cells.
\item \texttt{pool\_cert.json} -- full certificate study: 15
  configurations (3 families $\times$ seeds 7--10 at $n=1000$, plus 3
  families at $n=2000$ seed 7) $\times$ $k\in\{16,32,64\}$, with pool
  upper bound, repaired-dual lower bound, entropic lower bound, exact
  optimum, and per-cell timings.
\item \texttt{app\_hook.json} -- multi-robot episode totals and
  per-policy timings ($n\in\{50,200,800\}$, seeds 7--8, 15 rounds each).
\end{itemize}
Together these files contain every data point used in every table and
figure. Re-running any script with these JSON files present reproduces
the tables and figures in seconds without needing Blossom~VI; running a
script with the JSON deleted triggers a cold re-run.

The repository layout:
\begin{verbatim}
ot-study/otlib.py                  # exact (LAPJV, Blossom pool, 1-D),
                                   # sinkhorn, greenkhorn, dual repair, cert
ot-study/run_ot_study.py           # Parts A/B tables -> results/
ot-study/run_greenkhorn_study.py   # Greenkhorn-vs-Sinkhorn sweep
ot-study/run_pool_cert_study.py    # 15-config x 3-k certificate study
ot-study/app_hook_robots.py        # multi-robot task-allocation experiment
ot-study/make_figs.py              # regenerate paper/figs from cached JSON
ot-study/verify_pool_ub.py         # independent SciPy check of pool UBs
ot-study/results/                  # cached JSON for all runs/seeds
paper/                             # LaTeX source and figs for this paper
\end{verbatim}
A full cold re-run takes $\approx 25\,\mathrm{min}$ (Greenkhorn grid),
$\approx 20\,\mathrm{min}$ (certificate study), $\approx 6\,\mathrm{min}$
(application hook) on the testbed. Dense exact, Sinkhorn/Greenkhorn,
1-D, and application-hook scripts run on SciPy + NumPy alone. The
pool-certificate script uses Blossom~VI's dual-output interface
(\texttt{--duals} flag) and requires the Blossom~VI binary, which is
available directly from its authors \cite{blossomvi} (it is not bundled
with this repository); the cached JSON results listed above are included
so that tables and figures regenerate without it, and the pool
upper-bound costs can be verified independently with any bipartite LAP
solver on the same $k$NN subgraph (see \texttt{verify\_pool\_ub.py}).

\section{Prompts that directed this work}
\label{app:prompts}

This appendix summarizes, in chronological order, the prompts used by
Dmitry Kamenetsky to direct the work reported here. Prompts marked (verbatim) are
quoted exactly; earlier prompts are paraphrased from the conversation
record because their exact wording was not preserved.

\begin{enumerate}[leftmargin=2em,itemsep=2pt]
\item (paraphrase) Opening question: can the Blossom (matching) algorithm
be used for the TSP, and if so how---setting up the study of Edmonds'
algorithm as a component of TSP algorithms.
\item (paraphrase) Requests to implement and validate a Christofides
heuristic using Blossom for the matching step against exact optima on small
metric instances, and to benchmark the pre-shortcut multigraph.
\item (paraphrase) Request for a head-to-head comparison of Blossom V and
Blossom VI (matching costs and wall times), including on dense complete
graphs.
\item (verbatim, mid-task) ``Actually you can now use Blossom VI'' with a
link to the Blossom VI repository---direction to make Blossom VI the
default engine (retaining Blossom V for the head-to-head).
\item (paraphrase) Follow-up: ``perturb the 2-matching, merge its cycles
into one tour''---with the standing requirement that the merge be efficient
and come close to the optimal tour, and that results be compared to
state-of-the-art methods.
\item (paraphrase) Request for rigorous lower bounds via matching
(2-factor / cycle-cover relaxations) to certificate the gap of the
merge-based tours.
\item (verbatim) ``Where else is Blossom useful?''---advisory question that
produced the shortlist of exact-matching applications, of which optimal
transport was ranked \#1.
\item (verbatim) ``do \#1''---selection of the OT study as the next project.
\item (verbatim) ``Yes build it. Continue clearing the publication bar''---
approval to build the measured SOTA (Greenkhorn) baseline and the seed/distribution
robustness sweep.
\item (verbatim) ``Continue bridging the publication bar''---direction to
complete the remaining two items: a pool solver with a gap certificate,
and an application hook where the discrete plan is the deliverable.
\item (verbatim, this paper) Instruction to write up the work as a
professional arXiv paper with references, proofs, and an appendix
summarizing the prompts used; direction to disclose that the
algorithm design, coding, experiments, analysis, and writing were
performed by the AI agent under Dmitry Kamenetsky's direction.
\item (paraphrase, first review pass) Request to fix all issues flagged in
the first review pass (bug in Remark~1 domination direction, misleading
coordinate-ascent claim, entropic sign-convention paragraph, Theorem~2
proof gaps, abstract/table number drift, AmE spelling, proper Bertsekas
1992 citation, AI disclosure, Appendix~D authorship wording), rebuild the
PDF, and push to the GitHub repository under an MIT license.
\item (paraphrase, second review pass) Request to address a second pass
of review items including Table~3 recomputation for $f^*\approx 9\times 10^{-4}$,
geometric-mean reporting, plateau-tolerance reassurance, figure upgrades
to vector/$\ge 300$~dpi, and a spot-check justification for the cached
pool results.
\item (paraphrase, third review pass) Request to fix a final drift
between the abstract and Table~3 ($\sim$10--$\sim$80 min, $4\times 10^2$--$6\times 10^4$),
clarify the Table~3 caption labeling of measured vs extrapolated ranges,
align Remark~2's citation with the main text, remove a duplicated word,
and add explicit $n=2$ witnesses for Remark~1; plus a version number on
the title page.
\item (paraphrase, fourth revision pass) Shorten the title, use the
author's full name throughout, add references and a repository link to
the abstract, sort the bibliography by first appearance, clarify the
LSE$_j$ notation before its use in equation~(2), rewrite Contribution~1
to lead with the exact-vs-approx headline finding, move the 1-D
quantile-map proposition into its own Preliminaries subsection, explain
the seed convention, annotate the $r_i,t_j$ notation in the multi-robot
protocol, justify the use of a dense bipartite solver (rather than
Blossom~VI) in the application, reference Figure~3 in the text, remove
the off-by-one entry from Limitations, add a Future Work section, cite
Algorithm~1, fix the Table~4 ratio column rounding issue, clarify
Algorithm~1's return value and termination condition, define the $D(z)$
notation in Algorithm~2, and confirm the column-marginal expression.
\item (paraphrase, fifth revision pass) Add space around the
\texttt{gap$\approx$} symbol; explain why Figure~1's right panel
appeared to top out at $\sim$8\% (add f=0.02 and f=0.05 curves, align
the x-axis between panels, and note the 5.4--8.6\% range of f=0.01 in
the caption); fix the orphan ``pool solve.'' line; define cost plateau
explicitly; fix ``Section 4, Section 6'' to ``Sections 4 and 6'';
correct the title and author on the GitHub repository README; add
thin spaces between $\in$ and $\mathcal{U}/\mathbb{R}$; replace the
``in the spirit of Crouse'' aside with a proper citation to Crouse
(2016); swap the $\hat\alpha,\hat\beta$ order in Proposition~3 so the
source-side clip is presented first; relabel vanilla Sinkhorn from (V)
to (S) in Table~4; point readers at the cached JSON data files; add a
sentence to Figure~3 explaining greedy's relative speedup at $n=800$;
change ``$+17$--$+37$\%''/``$+6$--$+27$\%'' to ``$+17$ to $+37$\%''/
``$+6$ to $+27$\%'' for readability; turn the empirical decision rule
into a properly formatted table.
\item (paraphrase, sixth revision pass) Improve the decision-rule
table's readability (row separators, bold headers, extra vertical
stretch); drop the stray ``$+$'' from ``$+6$--$27$\%''; boldface the
zero vectors $\mathbf{0}$ in Algorithms 1 and 2 where potentials are
initialized.
\item (paraphrase, seventh revision pass) Add this and the previous
round's prompts to Appendix~D.
\end{enumerate}

\end{document}